\documentclass[11pt]{article}
\usepackage{amsmath,amssymb,amsthm,eucal, mathrsfs, mathtools,hyperref}
\usepackage{xcolor}
\usepackage[all]{xy}

\title{Spectral flow and Levinson's theorem for Schr\"{o}dinger operators}
\author{Angus Alexander, 
Adam Rennie\thanks{email: 
\texttt{angusa@uow.edu.au, renniea@uow.edu.au}}
\\[3pt]
School of Mathematics and 
Applied Statistics, University of Wollongong,\\
Wollongong, Australia\\
}

\topmargin=0pt
\advance\topmargin by -\headheight
\advance\topmargin by -\headsep
\textheight=8.9in  
\oddsidemargin=15pt
\evensidemargin=\oddsidemargin
\marginparwidth=0.5in
\textwidth=6.2in  


\makeatletter
\def\section{\@startsection{section}{1}{\z@}{-3.5ex plus -1ex minus
  -.2ex}{2.3ex plus .2ex}{\large\bf}}
\def\subsection{\@startsection{subsection}{2}{\z@}{-3.25ex plus -1ex
  minus -.2ex}{1.5ex plus .2ex}{\normalsize\bf}}
\makeatother

\numberwithin{equation}{section} 

\theoremstyle{plain} 
\newtheorem{thm}{Theorem}[section]
\newtheorem*{thm*}{Theorem}
\newtheorem{lemma}[thm]{Lemma}
\newtheorem{prop}[thm]{Proposition}

\theoremstyle{definition} 
\newtheorem{defn}[thm]{Definition}


\theoremstyle{remark} 
\newtheorem{rmk}[thm]{Remark}

\DeclareMathOperator{\Dom}{Dom}   
\DeclareMathOperator{\Tr}{Tr}     

\newcommand{\eps}{\varepsilon} 

\newcommand{\B}{\mathcal{B}}  
\newcommand{\C}{\mathbb{C}}   
\renewcommand{\d}{\mathrm{d}} 
\newcommand{\slim}{\mathop{\textup{s-lim}}}

\newcommand{\e}{\mathrm{e}}
\newcommand{\F}{\mathcal{F}}  
\renewcommand{\H}{\mathcal{H}}  
\renewcommand{\L}{\mathcal{L}} 
\newcommand{\N}{\mathbb{N}}   
\newcommand{\norm}[1]{\left|\left|#1\right|\right|} 
\newcommand{\R}{\mathbb{R}}   
\newcommand{\Sf}{\mathbb{S}}  

\newcommand{\Z}{\mathbb{Z}} 

\newcommand{\stroke}{\mathbin|}     

\newcommand{\Res}[1]{\mathop{\textup{Res}}_{#1}}

\def\pairL_#1(#2|#3){{}_{#1}(#2\stroke#3)} 
\def\pairR(#1|#2)_#3{(#1\stroke#2)_{#3}} 
\def\scal<#1|#2>{\langle#1\stroke#2\rangle} 


%


\renewcommand{\epsilon}{\varepsilon}

\hyphenation{geo-me-try ma-ni-fold ma-ni-folds pro-duct pro-ducts}



\theoremstyle{definition}



\definecolor{MyBlue}{cmyk}{1,0.13,0,0.63}
\definecolor{MyGreen}{cmyk}{0.91,0,0.88,0.52}
\newcommand{\mylinkcolor}{MyBlue}
\newcommand{\mycitecolor}{MyGreen}
\newcommand{\myurlcolor}{black}

\usepackage{hyperref}
\hypersetup{%
  bookmarksnumbered=true,bookmarksopen=false,%
  plainpages=false,
  linktocpage=true,%
  colorlinks=true,breaklinks=true,%
  linkcolor=\mylinkcolor,citecolor=\mycitecolor,urlcolor=\myurlcolor,%
  pdfpagelayout=OneColumn,%
  pageanchor=true,%
}





\begin{document}

\maketitle

\vspace{-2pc}

\begin{abstract}
We use  spectral flow to present a new proof of Levinson's theorem for Schr\"{o}dinger operators on $\R^n$ with smooth compactly supported potential. Our proof is valid in all dimensions and in the presence of resonances. The statement is expressed in terms of the spectral shift function and the ``high energy corrected time delay'' following Guillop\'{e} and others. 
\end{abstract}
\maketitle

\parindent=0.0in
\parskip=0.00in


\parskip=0.06in

\section{Introduction}
\label{sec:intro}
Much work has been done in recent years investigating the topological nature of Levinson's theorem from quantum scattering theory, both as an index theorem \cite{ANRR, AR23-4D, kellendonk08, kellendonk12, richard13, richard13ii, richard21} and as an index pairing \cite{alexander24, alexander23thesis, AR23}.
In this paper we prove Levinson's theorem for Hamiltonians $H_0,H$ on $\R^n$ by using spectral flow from $H_0$ to $H$. By applying the operator pseudodifferential calculus to the spectral flow formula of \cite{CPotSuk}, we obtain a proof of the integral form of Levinson's theorem in all dimensions and in the presence of zero energy resonances. The dominant contribution is from the eta invariants of the endpoints $H_0,H$, and  can be computed using the Birman-Kre\u{\i}n formula.
In particular, we give a new approach to the relationship between the spectral shift function and spectral flow, extending work of Azamov, Carey, Dodds and Sukochev \cite{ACDS, ACS}.
Our main result (see Theorem \ref{thm:levinson}) is

\begin{thm*}[Levinson's theorem]
Suppose that $V \in C_c^\infty(\R^n)$. Then the number $N$ of eigenvalues (counted with multiplicity) of $H = H_0+V$ is given by
\begin{align*}
-N &= \frac{1}{2\pi i}\int_0^\infty \left(\textup{Tr}(S(\lambda)^*S'(\lambda)) - p_n(\lambda) \right) \, \d \lambda - \beta_n(V) +N_{res}
\end{align*}
where $N_{res}$ is the contribution from resonances as defined in Theorem \ref{thm:spec-shift-values}
and the polynomial $p_n$ and constant $\beta_n$ are defined in Lemma \ref{lem:consts-beta} and Definition \ref{defn:high-energy-poly1}.
\end{thm*}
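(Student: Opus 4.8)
The strategy is to exhibit the integer $-N$ as a spectral flow and then to compute that spectral flow in two ways. Take the affine path $H_r := H_0 + rV$, $r\in[0,1]$, so that $\dot H_r = V$ is constant. For each $r$ the essential spectrum of $H_r$ is $[0,\infty)$ and the negative spectrum is purely discrete and varies continuously; it is empty at $r=0$ and consists of $N$ eigenvalues (with multiplicity) at $r=1$, all of which are created at the lower edge $0$ of the essential spectrum. Hence the net number of eigenvalues crossing a level $-\mu$ (for small $\mu>0$) from above, as $r$ runs from $0$ to $1$, is $N$; letting $\mu\downarrow 0$ identifies $\mathrm{sf}(H_0,H)$, the spectral flow relative to $0$, with $-N$. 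The one subtlety is that $0$ lies at the edge of the essential spectrum of every $H_r$, so a zero-energy eigenvalue or resonance of $H$ must be handled by hand: this is the origin of the term $N_{res}$ of Theorem~\ref{thm:spec-shift-values}, and the reason the resonance analysis is needed before the identity closes up.

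For the analytic evaluation I would apply the spectral flow formula of \cite{CPotSuk} to the path $\{H_r\}$. The essential input, supplied by the operator pseudodifferential calculus, is that $V\phi(H_r)$ is trace class for every Schwartz function $\phi$ (more generally for $\phi(x)=(1+x^2)^{-m/2}$ with $m$ large), with $r\mapsto\Tr(V\phi(H_r))$ continuous and the first-order perturbation estimates uniform on $[0,1]$; this is what makes the formula applicable even though $V$ is not trace class as a relative perturbation. Schematically the formula reads
\[ \mathrm{sf}(H_0,H) = \int_0^1\Tr\big(V\,\phi(H_r)\big)\,\d r + \tfrac12\big(\eta(H)-\eta(H_0)\big) + \tfrac12\dim\Ker H , \]
with $\eta$ a suitably (zeta-)regularised eta invariant. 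Because $\dot H_r=V$ is constant, the integrand is a total $r$-derivative, so $\int_0^1\Tr(V\phi(H_r))\,\d r = \Tr(g(H)-g(H_0))$ with $g'=\phi$; by Kre\u{\i}n's trace formula this equals $\int_{\R}g'(\lambda)\,\xi(\lambda)\,\d\lambda$ for the spectral shift function $\xi=\xi_{H,H_0}$, and the eta terms are of the same type. Thus the whole right-hand side becomes a linear functional of $\xi$, the dependence on the auxiliary $\phi$ cancelling — this is the extension, to the present non-Fredholm and resonant setting, of the Azamov--Carey--Dodds--Sukochev correspondence \cite{ACDS,ACS} between spectral flow and the spectral shift function.

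It then remains to insert the scattering data and renormalise at high energy. On $(-\infty,0)$ the spectral shift function equals minus the eigenvalue-counting function of $H$, with a jump at $0$ fixed by the zero-energy resonance; on $(0,\infty)$ the Birman--Kre\u{\i}n formula $\det S(\lambda) = e^{-2\pi i\xi(\lambda)}$ gives $\Tr\big(S(\lambda)^*S'(\lambda)\big) = -2\pi i\,\xi'(\lambda)$. Substituting and integrating by parts turns the $(0,\infty)$-contribution into $\tfrac{1}{2\pi i}\int_0^\infty\Tr(S(\lambda)^*S'(\lambda))\,\d\lambda$, which diverges; but the pseudodifferential calculus provides a complete asymptotic expansion of $\xi(\lambda)$ (equivalently of $\Tr(S^*S')$) in decreasing powers of $\lambda$, whose non-integrable part is the polynomial $p_n$ of Lemma~\ref{lem:consts-beta} and whose finite part is $\beta_n(V)$ of Definition~\ref{defn:high-energy-poly1}. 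The regularisation already built into the eta invariants supplies precisely these counterterms, so that $\int_0^\infty(\Tr(S^*S')-p_n)$ converges and the leftover pieces — the $\xi$-values at $0$ and $\infty$ and the eta and kernel corrections — reorganise into the remaining terms $-N$, $-\beta_n(V)$ and $N_{res}$ of the statement.

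Getting $\mathrm{sf}(H_0,H)=-N$ from eigenvalue counting is the easy part. The \emph{real work} is the renormalisation: controlling the interchange of the $r$-integral, the $\lambda$-integral and the regularisation limit; identifying the high-energy asymptotics of $\xi$ with the prescribed $p_n$ and $\beta_n(V)$; checking that the eta-invariant endpoint contributions are exactly the required counterterms; and tracking the low-energy/resonance bookkeeping through to $N_{res}$. The fact that $\{H_r\}$ is not Fredholm at the energy $0$ — the edge of the essential spectrum — is a second source of difficulty, and is precisely why the resonance analysis underlying Theorem~\ref{thm:spec-shift-values} has to be in place before $\mathrm{sf}(H_0,H)$ and its evaluation make sense.
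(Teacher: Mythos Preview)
Your overall strategy---spectral flow along the affine path, Birman--Kre\u{\i}n for the eta/endpoint contribution, pseudodifferential high-energy asymptotics for the counterterms---matches the paper's. But there is a structural gap that you flag without closing, and the paper closes it differently from what you suggest.

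You work with the path $H_r = H_0 + rV$ directly. None of these operators is Fredholm: $0$ lies on the boundary of the essential spectrum of every $H_r$, so the Phillips/\cite{CPotSuk} framework does not apply, and your identification $\mathrm{sf}(H_0,H)=-N$ is not a theorem but a heuristic (your ``letting $\mu\downarrow 0$'' is exactly the limit that is not covered by the existing machinery). The paper's resolution is not to push through at energy $0$ but to \emph{shift}: choose $\alpha$ large enough that the whole path $H_t(\alpha) = H_0 + tV + \alpha$ consists of invertible (hence Fredholm) operators. Along this shifted path the spectral flow is $0$, not $-N$. Levinson's theorem then appears by evaluating the analytic spectral-flow formula along $H_t(\alpha)$, setting it equal to $0$, and solving for $N$. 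The resonance correction $N_{res}$ arises not from a delicate edge-of-spectrum argument but from the jump of the spectral shift function at $\lambda=\alpha$, which the Birman--Kre\u{\i}n formula records cleanly (Lemma~\ref{lem:kernel-vanish}).

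Your second shortcut---``the integrand is a total $r$-derivative, so $\int_0^1 \Tr(V\phi(H_r))\,\d r = \Tr(g(H)-g(H_0))$''---is also not how the paper proceeds, and as written it is not justified: $\Tr(g(H_r))$ is infinite, and promoting the formal identity $\frac{\d}{\d r}g(H_r) = V\phi(H_r)$ to a statement about trace differences is precisely the nontrivial content of \cite{ACDS,ACS}, not a corollary of it. The paper instead keeps the one-form term and the eta term separate and computes each by a different method: the one-form term by the pseudodifferential resolvent expansion (Proposition~\ref{prop:one-form-res}), the eta term by Birman--Kre\u{\i}n (Proposition~\ref{prop:birman-krein-res}). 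In even dimensions each of these carries explicit $\alpha$-dependent polynomial pieces, and the point of Theorem~\ref{thm:spec-flow-final} is that these pieces cancel against one another, leaving only the $\alpha$-independent terms that constitute Levinson's theorem. That cancellation is a computation, not an a priori fact; your total-derivative argument would hide it rather than prove it.
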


The layout of the paper is as follows. In Section \ref{sec:recap-spec-flow} we recall the definition of spectral flow due to Phillips \cite{phillips96, phillips97} and the general formula for the spectral flow along a path of unbounded operators from \cite{CPotSuk}.  In Section \ref{sec:stat-scat} we summarise the stationary scattering theory for the Hamiltonians $H_0,H$ and in Section \ref{sec:spec-shift} we recall the spectral shift functon and its defining properties, including the Birman-Kre\u{\i}n trace formula. In Section \ref{sec:high-energy} we describe the high-energy behaviour of the spectral shift function from \cite{alexander24} and the pseudodifferential expansion of the resolvent from \cite{CPRS}.

In Section \ref{sec:spec-flow} we use the scattering techniques of Section \ref{sec:background} to analyse the spectral flow formula in two components. The first is an `integral of one-form' type term in Section \ref{sec:one-form} and the second is a Birman-Kre\u{\i}n contribution in Section \ref{sec:eta-term}. Finally, in Section \ref{sec:levinson} we obtain a formula for the spectral flow in terms of scattering data and as a consequence prove Levinson's theorem.

{\bf Acknowledgements} This work is supported by the ARC Discovery grant DP220101196 and AA was supported by an Australian Government RTP scholarship.
We would like to thank Alan Carey and Galina Levitina for useful input.


\section{Background and notations}\label{sec:background}

\subsection{Spectral flow}\label{sec:recap-spec-flow}

The concept of spectral flow was used by Atiyah, Patodi and Singer in \cite{aps75, aps76} as a tool to develop APS index theory. Spectral flow is intuitively defined as the net number of eigenvalues which change sign along a path of self-adjoint operators, with the convention that an eigenvalue changing from negative to positive will provide a contribution of $1$ to the spectral flow. We use the definition due to Phillips \cite{phillips96, phillips97}. Phillips' definition of spectral flow is valid in the much broader setting of semifinite von Neumann algebras with faithful normal semifinite traces, and while we do not need the full power of such a definition, we do need the ability to handle operators with continuous spectrum.

Consider the compact operators $\mathcal{K}(\H) \subset \B(\H)$ with trace $\Tr$ and let $\pi: \B(\H) \to \B(\H)\slash \mathcal{K}(\H)$ denote the projection onto the Calkin algebra. Let $\chi = \chi_{[0,\infty)}$ be the characteristic function of the interval $[0,\infty)$. Let $(T_t)_{t \in [0,1]}$ be any norm-continuous path of bounded self-adjoint Fredholm operators in $\B(\H)$, so that $\pi(T_t)$ is a norm continuous path of invertibles. Then $\pi\left(\chi\left(T_t \right) \right) = \chi \left( \pi \left(T_t \right) \right)$. Since the spectrum of the $\pi\left(T_t\right)$ are bounded away from zero, the path $\chi\left(\pi\left(T_t \right) \right)$ is continuous. By compactness (and \cite[Lemma 4.1]{BCPRSW}) we can choose a partition $0 = t_0 < t_1 < \cdots < t_k = 1$ such that
\begin{align*}
\left|\left| \pi \left( \chi \left( T_t \right) \right) - \pi \left( \chi \left(T_s \right) \right) \right| \right| &< \frac12
\end{align*}
for all $t,s \in [t_{i-1}, t_i]$ and $1 \leq i \leq k$. Defining the projection $P_i = \chi\left(T_{t_i} \right)$ we find that $P_{i-1}P_i: P_i \H \to P_{i-1} \H$ is Fredholm. We recall the following definition, due to Phillips \cite{phillips96, phillips97}.
\begin{defn}
Let $\H$ be a Hilbert space. For $t \in [0,1]$ let $(T_t)$ be any norm-continuous path of bounded self-adjoint Fredholm operators in $\B(\H)$. For a partition $0 = t_0 < t_1 < \cdots < t_k = 1$ of the interval $[0,1]$  define the operators $P_i = \chi\left(T_{t_i} \right)$. Then we define the spectral flow of the path $(T_t)$ by
\begin{align*}
\textup{sf}(T_t) &:= \sum_{i=1}^k \textup{Index}\left( P_{i-1} P_i \right).
\end{align*}
\end{defn}
We note that the above definition of spectral flow is independent of the choice of partition \cite{lesch05, phillips96, phillips97} and agrees with the topological definition used in \cite{aps75, aps76} when both make sense. For unbounded operators, we make the following definition of spectral flow \cite{CP1}.

\begin{defn}
Let $\H$ be a Hilbert space with trace $\Tr$. Let $(D_t)$ be a graph norm continuous path of unbounded self-adjoint Fredholm operators on $\H$. Define the function $F:\R \to [-1,1]$ by $F(x) = x(1+x^2)^{-\frac12}$. The spectral flow along the path $(D_t)$ is defined by
\begin{align*}
\textup{sf}(D_t) &:= \textup{sf}(F(D_t)).
\end{align*}
\end{defn}
Throughout the rest of this section $[0,1] \ni t \mapsto D_t$ stands for a path of unbounded self-adjoint linear Fredholm operators acting on some dense domain in $\H = L^2(\R^n)$. We denote by $(F_t) = (F(D_t))$ the bounded transform of the path $(D_t)$. We must also impose a smoothness assumption on $D_t$ to use analytic formulae for the spectral flow.

\begin{defn}
\begin{enumerate}
	\item A path $[0,1] \ni t \mapsto D_t$ is called $\Gamma$-differentiable at the point $t = t_0$ if and only if there exists a bounded linear operator $T$ such that
\begin{align*}
\lim_{t \to t_0} \left| \left| t^{-1} (D_t-D_{t_0})(\textup{Id}+D_{t_0}^2)^{-\frac12} - T \right| \right| &= 0.
\end{align*}
In this case we set $\dot{D}_{t_0} = T(\textup{Id}+D_{t_0}^2)^{\frac12}$. The operator $\dot{D}_t$ is a symmetric linear operator with domain $\Dom(D_t)$ \cite[Lemma 25]{CPotSuk}.
	\item If the mapping $t \mapsto \dot{D}_t (\textup{Id}+D_t^2)^{-\frac12}$ is defined and continuous with respect to the operator norm, then we call the path $t \mapsto D_t$ a continuously $\Gamma$-differentiable or a $C_\Gamma^1$ path.
\end{enumerate}
\end{defn}

The most general analytic spectral flow formula for the case of unbounded operators on a Hilbert space is given by the following theorem \cite[Theorem 9]{CPotSuk}. The sign of the second term in \eqref{eq:ceepotsuk} below appears incorrectly in \cite[Theorem 9]{CPotSuk}.

\begin{thm}\label{thm:cpotsuk}
Let $[0,1] \ni t \mapsto D_t$ be a piecewise $C_\Gamma^1$ path of linear operators and $F_t \in \B(\H)$ be Fredholm with $\norm{F_t} \leq 1$. Let $g:\R \to \R$ be a positive $C^2$ function such that
\begin{enumerate}
	\item $\int_\R g(x) \, \d x = 1$;
	\item $\int_0^1 \left|\left| \dot{D}_t g(D_t) \right| \right|_1 \, \d t < \infty$; and
	\item $G(D_1)-\frac12 B_1 - G(D_0) + \frac12 B_0 \in \mathcal{L}^1(\H)$, where $B_j = 2 \chi_{[0,\infty)}(D_j)-1$, and $G$ is the antiderivative of $g$ such that $G(\pm \infty) = \pm \frac12$. 
\end{enumerate}
Then
\begin{align}
\textup{sf}(D_t) &= \int_0^1 \Tr \left( \dot{D}_t g(D_t) \right) \, \d t - \Tr \left( G(D_1)-\frac12 B_1 - G(D_0) + \frac12 B_0 \right).
\label{eq:ceepotsuk}
\end{align}
\end{thm}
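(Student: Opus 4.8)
The plan is to realise $\textup{sf}(D_t)$ as the net jump accumulated by the trace of the trace-class operator $\Theta_t := G(D_t) - \tfrac12 B_t$ as $t$ runs from $0$ to $1$, and then to identify the ``smooth part'' of the variation of $t \mapsto \Tr(\Theta_t - \Theta_0)$ with the one-form integral $\int_0^1 \Tr(\dot D_t\,g(D_t))\,\d t$. First I would observe that condition (3), together with the $C_\Gamma^1$ hypothesis and condition (2), implies $\Theta_t - \Theta_0 \in \mathcal{L}^1(\H)$ for \emph{every} $t$, so that $\phi(t) := \Tr(\Theta_t - \Theta_0)$ is a well-defined real function on $[0,1]$. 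Fredholmness of $F_t$ guarantees that $0$ is never in the essential spectrum of $D_t$, so the spectral projection $\chi_{[0,\infty)}(D_t)$ can change only by finite-rank jumps, occurring along the partition $0 = t_0 < \cdots < t_k = 1$ of Section \ref{sec:recap-spec-flow} in amounts recorded by $\textup{Index}(P_{i-1}P_i)$.

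The analytic core is the trace-norm identity $\frac{\d}{\d t}\,\Tr(G(D_t)) = \Tr(\dot D_t\,g(D_t))$, valid at every $t$ away from a crossing of an eigenvalue of $D_t$ through $0$. Although $G(D_t)$ is only bounded and not of trace class, its derivative along a $C_\Gamma^1$ path \emph{is} the trace-class operator $\dot D_t\,g(D_t)$ of condition (2): since $G' = g$ and $g \in C^2$, a Duhamel/double-operator-integral expansion lets one differentiate the functional calculus and interchange trace with $\frac{\d}{\d t}$. Because $B_t$ is locally constant between crossings, the same identity gives $\phi'(t) = \Tr(\dot D_t\,g(D_t))$ there. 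At a crossing time $s$, only the term $-\tfrac12 B_t = \tfrac12 - \chi_{[0,\infty)}(D_t)$ is discontinuous, and since $\chi_{[0,\infty)}(D_t)$ jumps \emph{upward} by the net multiplicity $n_s$ of eigenvalues moving from $(-\infty,0)$ into $[0,\infty)$, the function $\phi$ has a \emph{downward} jump of exactly $n_s$ at $s$. Summing these jumps over each partition interval recovers $\textup{Index}(P_{i-1}P_i)$, so $\sum_s n_s = \sum_{i=1}^k \textup{Index}(P_{i-1}P_i) = \textup{sf}(D_t)$.

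Applying the fundamental theorem of calculus to $\phi$, which is $C^1$ off the finite crossing set, then gives $\Tr(\Theta_1 - \Theta_0) = \int_0^1 \Tr(\dot D_t\,g(D_t))\,\d t - \textup{sf}(D_t)$, and rearranging produces \eqref{eq:ceepotsuk}. The delicate sign is precisely the one that distinguishes this statement from \cite[Theorem 9]{CPotSuk}: an \emph{increase} of $\chi_{[0,\infty)}(D_t)$ is positive spectral flow but a \emph{decrease} of $\Theta_t$, which is what forces the minus sign in front of $\Tr(G(D_1) - \tfrac12 B_1 - G(D_0) + \tfrac12 B_0)$.

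The step I expect to be the main obstacle is the trace-norm differentiation $\frac{\d}{\d t}\Tr(G(D_t)) = \Tr(\dot D_t\,g(D_t))$ for a merely $C_\Gamma^1$ path: one must estimate $G(D_t) - G(D_s)$ modulo trace class by a Duhamel expansion, show the remainder is $o(|t-s|)$ in $\mathcal{L}^1(\H)$ using conditions (1)--(2), and deal with non-transverse crossings, where an eigenvalue may linger at or oscillate about $0$ --- this last issue is handled cleanly by working on the Calkin-norm-fine partition above, where each spectral-flow contribution is a single Fredholm index rather than a count of transverse crossings. An alternative that avoids some unbounded-operator technicalities is to pass first to the bounded transform $F_t = F(D_t)$, apply the known spectral flow formula for $C^1$ paths of bounded self-adjoint Fredholm operators, and then substitute $x = y(1+y^2)^{-1/2}$ to turn the bounded integrand back into $\Tr(\dot D_t\,g(D_t))$; the chain rule for that substitution is where the overall sign must once more be tracked.
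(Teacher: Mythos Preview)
The paper does not prove this theorem at all: it is quoted verbatim from \cite[Theorem~9]{CPotSuk}, with only the parenthetical remark that the sign of the second term appears incorrectly there. So your proposal is not to be compared against a proof in the paper, but against the original argument of Carey--Potapov--Sukochev.

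Your sketch is in the right spirit and captures the essential mechanism: the smooth variation of $t\mapsto G(D_t)$ contributes the one-form integral, while the discontinuities of $t\mapsto B_t$ record the spectral flow. However, there are genuine technical gaps you have flagged but perhaps underestimated. First, the assertion that $\Theta_t-\Theta_0\in\mathcal{L}^1(\H)$ for \emph{every} $t\in[0,1]$ does not follow immediately from conditions (2) and (3): condition (3) only gives the endpoint, and deducing the intermediate statement requires exactly the trace-norm differentiability of $t\mapsto G(D_t)$ that you identify as the main obstacle. Second, for a merely $C_\Gamma^1$ path the eigenvalue crossings at $0$ need not form a finite set, nor need the spectral projection $\chi_{[0,\infty)}(D_t)$ be piecewise constant with finite-rank jumps; Phillips' definition via the partition is designed precisely to sidestep any assumption of transversality or isolated crossings, and your argument as written leans on a picture (finitely many crossing times $s$, each with a well-defined multiplicity $n_s$) that is stronger than what is available. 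The proof in \cite{CPotSuk} handles both issues by first establishing the formula for paths with invertible endpoints and then using the homotopy invariance of spectral flow together with careful double-operator-integral estimates; your alternative route via the bounded transform $F_t$ is closer to what they actually do, and would be the safer line to pursue.
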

In our applications of this formula we will take $D_t=H_0+\alpha{\rm Id} +tV$ where $H_0$ is the free Hamiltonian, $\alpha$ a carefully chosen constant and $V$ a suitable potential. We now describe these ingredients.

\subsection{Stationary scattering theory}\label{sec:stat-scat}

We consider the scattering theory on $\R^n$ associated to the operators
\begin{align*}
H_0  &= - \sum_{j=1}^n \frac{\partial^2}{\partial x_j^2} = - \Delta \quad \textup{ and } H = H_0+V,
\end{align*}
where the (multiplication operator by the) potential $V$ is a smooth compactly supported and real-valued function. With $\langle\cdot,\cdot\rangle$ the Euclidean inner product on $\R^n$, we denote the Fourier transform by
\[
\F:L^2(\R^n)\to L^2(\R^n),\qquad [\F f](\xi)=(2\pi)^{-\frac{n}{2}}\int_{\R^n}e^{-i\langle x,\xi\rangle}f(x)\,\d x.
\]
Note that the Fourier transform $\F$ is an isomorphism from $H^{s,t}$ to $H^{t,s}$ for any $s,t \in \R$.

We denote by $\mathcal{B}(\H_1,\H_2)$, $\mathcal{K}(\H_1,\H_2)$ and $\L^1(\H_1, \H_2)$ the bounded, compact and trace class operators from $\H_1$ to $\H_2$. For $z \in \C \setminus \R$, we let
\[
R_0(z)=(H_0-z)^{-1},\qquad R(z)=(H-z)^{-1}.
\]
The operator $H_0$ has purely absolutely continuous spectrum, and in particular no kernel. The operator $H$ can have finitely many eigenvalues which are negative, or zero \cite[Theorem 6.1.1]{yafaev10}. 

Several Hilbert spaces recur, and we adopt the notation (following \cite[Section 2]{jensen81} which contains an excellent discussion on the relations between the spaces and operators we introduce here)
\[
\H = L^2(\R^n),\quad \mathcal{P} = L^2(\Sf^{n-1}),\quad \H_{spec} = L^2(\R^+, \mathcal{P}) \cong L^2(\R^+) \otimes \mathcal{P}.
\] 
Here $\H_{spec}$ provides the Hilbert space on which we can diagonalise the free Hamiltonian $H_0$.

Since $V$ is bounded, $H = H_0+V$ is self-adjoint with $\Dom(H) = \Dom(H_0)$. Since $V\in C_c^\infty(\R^n)$, the wave operators 
\[
W_\pm=\mathop{\textup{s-lim}}_{t\to\pm\infty}e^{itH}e^{-itH_0}
\]
exist and are asymptotically complete \cite[Theorem 1.6.2]{yafaev10}. 
The wave operators are partial isometries satisfying $W_\pm^*W_\pm = \textup{Id}$ and $W_\pm W_\pm^* = P_{ac}$, the projection onto the absolutely continuous subspace for $H$. The scattering operator is the unitary operator
\begin{align}
S &= W_+^*W_-,
\end{align}
which commutes strongly with the free Hamiltonian $H_0$.
For our analysis of the scattering operator, we describe the explicit unitary which diagonalises the free Hamiltonian. 
\begin{defn}
\label{def:diag}
Define the operator which diagonalises the free Hamiltonian $H_0$ as
\[
F_0: \H \to \H_{spec}\quad \mbox{by} \quad [F_0 f](\lambda,\omega) = 2^{-\frac12} \lambda^{\frac{n-2}{4}} [\F f](\lambda^\frac12 \omega).
\] 
\end{defn}
By \cite[p. 439]{jensen81} the operator $F_0$ is unitary and for $\lambda \in[0,\infty)$, $\omega\in \Sf^{n-1}$ and $f \in \H_{spec}$ we have
\[
[F_0H_0F_0^* f](\lambda,\omega)=\lambda f(\lambda,\omega)=:\lambda f(\lambda,\omega).
\] 
As a consequence of the relation $SH_0 = H_0 S$, there exists a family $\{S(\lambda) \}_{\lambda \in \R^+}$ of unitary operators on $\mathcal{P} = L^2(\Sf^{n-1})$ such that for all $\lambda \in \R^+, \omega \in \Sf^{n-1}$ and $f \in \H$ we have
\begin{align*}
[F_0 S f](\lambda,\omega) &= S(\lambda) [F_0 f](\lambda,\omega).
\end{align*}
For historical reasons, we refer to $S(\lambda)$ as the scattering matrix at energy $\lambda \in \R^+$ since in dimension $n = 1$ the operator $S(\lambda)$ is an $M_2(\C)$-valued function.

Note that the operators $H_0$ and $H$ are not Fredholm, since $0$ is in the essential spectrum of both. To use the spectral flow formula of Theorem \ref{thm:cpotsuk} we make the following adjustment for the rest of this article. Let $\nu \leq 0$ be the furthest eigenvalue of $H$ from zero. We fix $\alpha > -2\nu+1$, so that the operators $H_0(\alpha) = H_0+\alpha$ and $H(\alpha) = H+\alpha$ define Fredholm operators. As a consequence, the path
\begin{align*}
[0,1] \ni t \mapsto H_0+tV+\alpha =: H_t(\alpha)
\end{align*}
defines a $C_\Gamma^1$ path of Fredholm operators with $\dot{H}_t(\alpha) = V$. The operator $H_0(\alpha)$ has purely absolutely continuous spectrum $\sigma(H_0(\alpha)) = [\alpha,\infty)$ and the operator $H(\alpha)$ has absolutely continuous spectrum $\sigma_{ac}(H(\alpha)) = \sigma(H_0(\alpha))$. In addition, the operator $H(\alpha)$ has a finite number of distinct eigenvalues $0 < \lambda_1(\alpha) < \lambda_2(\alpha) < \cdots < \lambda_K(\alpha) \leq \alpha$ of finite multiplicity. The eigenvalues satisfy $\lambda_j(\alpha) = \lambda_j + \alpha$, with $\lambda_1 < \lambda_2 < \cdots < \lambda_K \leq 0$ the distinct eigenvalues of $H$. We write $M(\lambda_j) = M(\lambda_j(\alpha))$ for the multiplicity of the eigenvalue $\lambda_j$ and use the notation $N_0$ for the multiplicity of the zero eigenvalue for $H$. We also write
\begin{align*}
N &= \sum_{j=1}^K M(\lambda_j)
\end{align*}
for the total number of eigenvalues of $H$ (counted with multiplicity). Let $P_{ac}(H_0(\alpha))$ denote the projection onto the absolutely continuous spectrum for $H_0(\alpha)$. The wave operators 
\begin{align*}
W_\pm(\alpha) = \slim_{t \to \pm \infty} \e^{itH(\alpha)} \e^{-itH_0(\alpha)} P_{ac}(H_0(\alpha)) = W_\pm
\end{align*}
exist and are asymptotically complete by the invariance principle \cite[Theorem XI.11]{reedsimon79}. Direct calculation gives the following diagonalisation for $H_0(\alpha)$.
\begin{lemma}
The operator $F_\alpha: \H \to L^2([\alpha,\infty)) \otimes L^2(\Sf^{n-1})$ given by 
\begin{align*}
[F_\alpha f](\lambda,\omega) = [F_0 f](\lambda-\alpha,\omega)
\end{align*}
satisfies
\begin{align*}
[F_\alpha H_0(\alpha) f](\lambda,\omega) = \lambda [F_\alpha f](\lambda,\omega).
\end{align*}
\end{lemma}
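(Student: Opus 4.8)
The plan is to reduce the claim to the diagonalisation of $H_0$ by $F_0$ already recorded above, namely $[F_0 H_0 f](\mu,\omega) = \mu\,[F_0 f](\mu,\omega)$ for $f \in \Dom(H_0)$ and a.e.\ $(\mu,\omega) \in \R^+ \times \Sf^{n-1}$. First I would note that $F_\alpha = \tau_\alpha \circ F_0$, where $\tau_\alpha$ is the translation unitary $L^2(\R^+) \otimes \Pp \to L^2([\alpha,\infty)) \otimes \Pp$ given by $(\tau_\alpha h)(\lambda,\omega) = h(\lambda-\alpha,\omega)$; since $F_0$ is unitary this already shows $F_\alpha$ is unitary, and in particular $\Dom(H_0(\alpha)) = \Dom(H_0)$.

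Next, for $f \in \Dom(H_0)$ I would write $H_0(\alpha) f = H_0 f + \alpha f$ and apply the defining relation $[F_\alpha g](\lambda,\omega) = [F_0 g](\lambda-\alpha,\omega)$ with $g = H_0 f$ and with $g = \alpha f$, obtaining
\begin{align*}
[F_\alpha H_0(\alpha) f](\lambda,\omega) &= [F_0 H_0 f](\lambda-\alpha,\omega) + \alpha\,[F_0 f](\lambda-\alpha,\omega).
\end{align*}
Invoking the $F_0$-diagonalisation at energy $\mu = \lambda-\alpha$ gives $[F_0 H_0 f](\lambda-\alpha,\omega) = (\lambda-\alpha)\,[F_0 f](\lambda-\alpha,\omega)$, so the right-hand side collapses to $\lambda\,[F_0 f](\lambda-\alpha,\omega) = \lambda\,[F_\alpha f](\lambda,\omega)$, which is the assertion (equivalently, $[F_\alpha H_0(\alpha) F_\alpha^* h](\lambda,\omega) = \lambda\, h(\lambda,\omega)$).

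I do not expect any real obstacle here: the computation is purely formal once the translation structure $F_\alpha = \tau_\alpha F_0$ is isolated. The only point that needs a word of care is the domain, so I would first verify the identity on a core for $H_0$ — for instance the Schwartz functions $\mathcal{S}(\R^n)$, on which $F_0$ acts by the explicit formula of Definition \ref{def:diag} — and then extend by density, using that $F_\alpha$ carries $\Dom(H_0)$ onto $\{h : \lambda \mapsto \lambda h(\lambda,\cdot) \in L^2\}$, the natural domain of multiplication by $\lambda$ on $L^2([\alpha,\infty)) \otimes \Pp$.
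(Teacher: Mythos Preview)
Your argument is correct and is exactly the ``direct calculation'' the paper alludes to but does not spell out: reduce via the shift $F_\alpha=\tau_\alpha F_0$ and $H_0(\alpha)=H_0+\alpha$ to the already stated diagonalisation $[F_0H_0f](\mu,\omega)=\mu[F_0f](\mu,\omega)$. There is nothing to add beyond your domain remark, which is more care than the paper itself takes.
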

The scattering operator $S = W_+^*W_-$ is unitary and commutes with $H_0(\alpha)$ and so there exists a family $\{S_\alpha(\lambda)\}_{\lambda \in [\alpha,\infty)}$  of unitary operators on $L^2(\Sf^{n-1})$ such that
\begin{align*}
[F_\alpha S f](\lambda,\omega) &= S_\alpha(\lambda) [F_\alpha f](\lambda,\omega).
\end{align*}
In fact, we have $S_\alpha(\lambda) = S(\lambda-\alpha)$ for all $\lambda \in [\alpha,\infty)$. Pointwise we have $S_\alpha(\lambda)-\textup{Id} \in \L^1(L^2(\mathbb{S}^{n-1}))$, \cite[Proposition 8.1.5]{yafaev10}.

\subsection{The spectral shift function and the Birman-Kre\u{\i}n trace formula}\label{sec:spec-shift}

We now recall the spectral shift function \cite{birman62, krein53} for the pair $(H(\alpha),H_0(\alpha))$ and some of its defining properties (see \cite[Theorems 0.9.2 and 0.9.7]{yafaev10}). The proofs in \cite{yafaev10} only consider $\alpha = 0$, however extend directly to $\alpha > 0$ by translation.

\begin{thm}\label{thm:ssf-properties}
Suppose that $V \in C_c^\infty(\R^n)$, $\alpha \geq 0$ and let $S$ be the corresponding scattering operator. Then there exists a unique (up to an additive constant) real-valued piecewise-$C^1$ function $\xi_\alpha(\cdot,H,H_0): \R \to \R$ such that 
\begin{align}\label{eq:ssf-defining}
\textup{Tr}(f(H(\alpha))-f(H_0(\alpha))) &=  \int_\R \xi_\alpha(\lambda, H, H_0) f'(\lambda) \, \d \lambda,
\end{align}
at least for all $f \in C^2(\R)$ with two locally bounded derivatives and satisfying
\begin{align}\label{eq:ssf-condition}
\frac{\d}{\d \lambda} \left( \lambda^{m+1} f'(\lambda) \right) &= O(\lambda^{-1-\eps})
\end{align}
as $\lambda \to \infty$, for some $\eps > 0$ and $m > \frac{n}{2}$.
We specify $\xi_\alpha(\cdot, H, H_0)$ uniquely by the convention $\xi_\alpha(\lambda, H, H_0) = 0$ for $\lambda$ sufficiently negative. Thus for $\lambda < \alpha$, $\xi_\alpha(\cdot,H,H_0)$ satisfies  the relation
\begin{align*}
\xi_\alpha(\lambda,H,H_0) &= -\sum_{k=1}^K M(\lambda_k(\alpha))\, \chi_{[\lambda_k(\alpha),\infty)}(\lambda),
\end{align*}
where we we have indexed the distinct eigenvalues of $H(\alpha)$ as $\lambda_1(\alpha) < \cdots < \lambda_K(\alpha)$ and each $\lambda_j(\alpha)$ has multiplicity $M(\lambda_j(\alpha))$. Furthermore, we have $\xi_\alpha(\cdot,H,H_0) \vert_{(\alpha,\infty)} \in C^1(\alpha,\infty)$ and for $\lambda > \alpha$ the relations
\begin{align*}
\textup{Det}(S_\alpha(\lambda)) &= \e^{-2\pi i \xi_\alpha(\lambda)} \quad \textup{ and } \quad \textup{Tr}\left(S_\alpha(\lambda)^*S_\alpha'(\lambda)\right) = -2\pi i \xi_\alpha'(\lambda)
\end{align*}
hold. Furthermore, we have $\xi_\alpha(\lambda) = \xi_0(\lambda-\alpha)$ for almost all $\lambda \in \R$.
\end{thm}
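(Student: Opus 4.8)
The strategy is to reduce every assertion to the case $\alpha = 0$, which is exactly \cite[Theorems 0.9.2 and 0.9.7]{yafaev10} together with the Birman--Kre\u{\i}n determinant formula \cite{yafaev10}, and then to transport the conclusions along the translation $H_0(\alpha) = H_0+\alpha$, $H(\alpha) = H+\alpha$. Given $f \in C^2(\R)$ with two locally bounded derivatives satisfying \eqref{eq:ssf-condition}, put $g(\mu) := f(\mu+\alpha)$. By the Borel functional calculus $g(H_0) = f(H_0(\alpha))$ and $g(H) = f(H(\alpha))$, so the trace on the left of \eqref{eq:ssf-defining} for the pair $(H(\alpha),H_0(\alpha))$ equals $\Tr(g(H)-g(H_0))$.

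First I would check that $g$ is an admissible test function for the $\alpha = 0$ theorem. It is plainly $C^2$ with two locally bounded derivatives, and since $g'(\mu) = f'(\mu+\alpha)$,
\begin{align*}
\frac{\d}{\d\mu}\bigl(\mu^{m+1}g'(\mu)\bigr) &= (m+1)\mu^{m}f'(\mu+\alpha) + \mu^{m+1}f''(\mu+\alpha);
\end{align*}
comparing with the corresponding expression for $f$ at the point $\mu+\alpha$ and using $\mu^{m}\sim(\mu+\alpha)^{m}$, $\mu^{m+1}\sim(\mu+\alpha)^{m+1}$ as $\mu\to\infty$ shows that $g$ satisfies \eqref{eq:ssf-condition} with the same $m$ and some possibly smaller $\eps'>0$. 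Applying \cite[Theorem 0.9.2]{yafaev10} to $(H,H_0)$ then gives
\begin{align*}
\Tr\bigl(f(H(\alpha))-f(H_0(\alpha))\bigr)
&= \int_\R \xi_0(\lambda,H,H_0)\,f'(\lambda+\alpha)\,\d\lambda \\
&= \int_\R \xi_0(\lambda-\alpha,H,H_0)\,f'(\lambda)\,\d\lambda,
\end{align*}
the second line by the substitution $\lambda\mapsto\lambda-\alpha$. Hence $\xi_\alpha(\cdot,H,H_0):=\xi_0(\cdot-\alpha,H,H_0)$ satisfies \eqref{eq:ssf-defining}, and uniqueness up to an additive constant transfers from the $\alpha = 0$ case: a second solution $\eta$ would, after the same shift, force $\eta(\cdot+\alpha)-\xi_0(\cdot,H,H_0)$ to be constant.

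It remains to transport the structural properties. Real-valuedness, the piecewise-$C^1$ property, and the normalisation $\xi_\alpha(\lambda,H,H_0)=0$ for $\lambda$ sufficiently negative are immediate from the shift, since $\xi_0(\lambda-\alpha,H,H_0)$ vanishes for $\lambda-\alpha$ sufficiently negative. For $\lambda<\alpha$ the eigenvalue-jump formula follows from $\xi_0(\mu,H,H_0)=-\sum_k M(\lambda_k)\chi_{[\lambda_k,\infty)}(\mu)$ (for $\mu<0$) by setting $\mu=\lambda-\alpha$ and using $\lambda_k(\alpha)=\lambda_k+\alpha$, $M(\lambda_k(\alpha))=M(\lambda_k)$; likewise $C^1$ regularity on $(\alpha,\infty)$ comes from that of $\xi_0$ on $(0,\infty)$. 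For the scattering relations, the identity $\textup{Det}(S_\alpha(\lambda))=\e^{-2\pi i\xi_\alpha(\lambda)}$ on $(\alpha,\infty)$ follows from the $\alpha=0$ Birman--Kre\u{\i}n formula, the equality $S_\alpha(\lambda)=S(\lambda-\alpha)$ established in Section \ref{sec:stat-scat}, and $\xi_\alpha(\lambda)=\xi_0(\lambda-\alpha)$; and since $\lambda\mapsto S_\alpha(\lambda)-\textup{Id}$ is $C^1$ into $\L^1(L^2(\Sf^{n-1}))$ on $(\alpha,\infty)$, the function $\lambda\mapsto\textup{Det}(S_\alpha(\lambda))$ is $C^1$ with $\tfrac{\d}{\d\lambda}\log\textup{Det}(S_\alpha(\lambda))=\Tr(S_\alpha(\lambda)^{-1}S_\alpha'(\lambda))=\Tr(S_\alpha(\lambda)^{*}S_\alpha'(\lambda))$ by unitarity, so differentiating the determinant identity gives $\Tr(S_\alpha(\lambda)^{*}S_\alpha'(\lambda))=-2\pi i\xi_\alpha'(\lambda)$.

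The one point requiring genuine care is the verification that the admissible class of test functions cut out by \eqref{eq:ssf-condition} is stable under translation $f\mapsto f(\cdot+c)$ for $c\in\R$, which is what licenses the appeal to \cite[Theorems 0.9.2 and 0.9.7]{yafaev10} in both directions (for existence and for uniqueness); this is elementary but should not be skipped. Secondary bookkeeping concerns the trace-class differentiability of $\lambda\mapsto S_\alpha(\lambda)$ needed to differentiate $\log\textup{Det}$, supplied by \cite[Proposition 8.1.5]{yafaev10} and the regularity recalled in Section \ref{sec:stat-scat}. Everything else is a direct transcription of Yafaev's results.
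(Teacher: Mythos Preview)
Your proposal is correct and matches the paper's approach exactly: the paper does not give a proof of this theorem but simply cites \cite[Theorems 0.9.2 and 0.9.7]{yafaev10} for the case $\alpha=0$ and remarks that the case $\alpha>0$ ``extend[s] directly \ldots\ by translation.'' You have carefully filled in the translation details (including the stability of the test-function class under shifts and the transport of the Birman--Kre\u{\i}n identities via $S_\alpha(\lambda)=S(\lambda-\alpha)$) that the paper leaves implicit.
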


We call $\xi_\alpha(\cdot,H,H_0)$ the spectral shift function for the pair $(H(\alpha),H_0(\alpha))$ and will often just write $\xi_\alpha = \xi_\alpha(\cdot,H,H_0)$. We also write $\xi = \xi_0$. Using integration by parts we can rewrite the definining property \eqref{eq:ssf-defining} in a sometimes more convenient fashion, known as the Birman-Kre\u{\i}n trace formula \cite[Theorem III.4]{guillope81}.

\begin{lemma}
Suppose that $V \in C_c^\infty(\R^n)$, $\alpha \geq 0$ and let $S_\alpha, \xi_\alpha$ be the corresponding scattering operator and spectral shift function. Then for all $f \in C_c^\infty(\R)$ we have
\begin{align*}
\textup{Tr}(f(H(\alpha))-f(H_0(\alpha))) &= \frac{1}{2\pi i} \int_\alpha^\infty f(\lambda) \textup{Tr}\left(S_\alpha(\lambda)^*S_\alpha'(\lambda) \right) \, \d \lambda  + \sum_{k=1}^K f(\lambda_k(\alpha)) M(\lambda_k(\alpha)) \\
&\quad + f(\alpha) \left( \xi_\alpha(\alpha-)-\xi_\alpha(\alpha+)-M(\alpha) \right),
\end{align*}
where we have defined $\displaystyle \xi_\alpha(\alpha \pm) = \lim_{\eps \to 0^+} \xi_\alpha(\alpha \pm \varepsilon)$.
\end{lemma}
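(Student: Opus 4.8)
The plan is to derive this directly from the defining property \eqref{eq:ssf-defining} of the spectral shift function by integration by parts, paying attention to the threshold $\lambda=\alpha$ where $\xi_\alpha$ jumps. First I would observe that any $f\in C_c^\infty(\R)$ is $C^2$ with locally bounded first two derivatives and satisfies \eqref{eq:ssf-condition} trivially, since $f'$ vanishes for large $\lambda$; hence Theorem \ref{thm:ssf-properties} applies and
\[
\textup{Tr}(f(H(\alpha))-f(H_0(\alpha))) = \int_{-\infty}^\alpha \xi_\alpha(\lambda)f'(\lambda)\,\d\lambda + \int_\alpha^\infty \xi_\alpha(\lambda)f'(\lambda)\,\d\lambda ,
\]
the point $\lambda=\alpha$ being Lebesgue-null.

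For the integral over $(-\infty,\alpha)$, Theorem \ref{thm:ssf-properties} identifies $\xi_\alpha$ there with the step function $-\sum_{k=1}^K M(\lambda_k(\alpha))\chi_{[\lambda_k(\alpha),\infty)}$, so integrating term by term gives $\sum_{k=1}^K M(\lambda_k(\alpha))f(\lambda_k(\alpha)) - N f(\alpha)$, using $\sum_k M(\lambda_k(\alpha))=N$; the same description also reads off the one-sided limit $\xi_\alpha(\alpha-)=M(\alpha)-N$, where $M(\alpha)$ is the multiplicity of $\alpha$ as an eigenvalue of $H(\alpha)$ (equal to $0$ unless $\lambda_K(\alpha)=\alpha$). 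For the integral over $(\alpha,\infty)$, $\xi_\alpha$ is $C^1$ with $\xi_\alpha'=-\tfrac{1}{2\pi i}\textup{Tr}(S_\alpha^*S_\alpha')$ and $f$ is compactly supported, so integrating by parts produces the boundary term $-\xi_\alpha(\alpha+)f(\alpha)$ together with $\tfrac{1}{2\pi i}\int_\alpha^\infty f(\lambda)\,\textup{Tr}(S_\alpha(\lambda)^*S_\alpha'(\lambda))\,\d\lambda$. Adding the two contributions and rewriting the coefficient of $f(\alpha)$, namely $-N-\xi_\alpha(\alpha+)$, as $\xi_\alpha(\alpha-)-\xi_\alpha(\alpha+)-M(\alpha)$ via $\xi_\alpha(\alpha-)=M(\alpha)-N$, yields the asserted formula.

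The step requiring the most care is the threshold: the integration by parts on $(\alpha,\infty)$ presupposes that $\xi_\alpha$ admits a finite limit $\xi_\alpha(\alpha+)$ and that $\xi_\alpha'$, equivalently $\textup{Tr}(S_\alpha(\lambda)^*S_\alpha'(\lambda))$, is integrable on $(\alpha,R]$ with $R=\sup(\supp f)$. Both are consequences of the low-energy behaviour of the spectral shift function for $V\in C_c^\infty(\R^n)$ recorded in Theorem \ref{thm:ssf-properties} and \cite{guillope81}; alternatively one integrates by parts on $(\alpha+\varepsilon,\infty)$ and lets $\varepsilon\to0^+$, the boundary term converging to $\xi_\alpha(\alpha+)f(\alpha)$ by the very definition of $\xi_\alpha(\alpha+)$. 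The remaining manipulations are elementary bookkeeping.
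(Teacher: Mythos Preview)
Your proposal is correct and follows exactly the approach the paper indicates: the paper does not give a detailed proof but states that the formula is obtained ``using integration by parts'' from the defining property \eqref{eq:ssf-defining}, citing \cite[Theorem III.4]{guillope81}. Your argument fills in precisely these details---splitting at $\lambda=\alpha$, evaluating the step-function piece below threshold, integrating by parts above threshold using $\xi_\alpha'=-\tfrac{1}{2\pi i}\Tr(S_\alpha^*S_\alpha')$, and recombining via $\xi_\alpha(\alpha-)=M(\alpha)-N$---and your care about the existence of $\xi_\alpha(\alpha+)$ is appropriate and consistent with what the paper uses later (cf.\ Theorem \ref{thm:spec-shift-values}).
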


In fact by Theorem \ref{thm:ssf-properties} we have, with $N$ the total number of eigenvalues of $H$ counted with multiplicity and $N_0 = M(\alpha)$ the number of zero eigenvalues for $H$, the relation $\xi_\alpha(\alpha-) = -N+N_0$. We can then rewrite the Birman-Kre\u{\i}n trace formula as
\begin{align*}
\textup{Tr}(f(H(\alpha))-f(H_0(\alpha))) &= \frac{1}{2\pi i} \int_\alpha^\infty f(\lambda) \textup{Tr}\left(S_\alpha(\lambda)^*S_\alpha'(\lambda) \right) \, \d \lambda  + \sum_{k=1}^K f(\lambda_k(\alpha)) M(\lambda_k(\alpha)) \\
&\quad + f(\alpha) \left( -N-\xi_\alpha(\alpha+) \right).
\end{align*}

\subsection{Resolvent expansions and limiting behaviour of the spectral shift function}\label{sec:high-energy}

For $k \in \N \cup \{0\}$ and $f \in C_c^\infty(\R^n)$ we introduce the notation $f^{(k)} = [H_0, [H_0, [\cdots, [H_0, f] \cdots ]] ]$, where the expression has $k$ commutators. We recall the following pseudodifferential expansion of the resolvent \cite[Lemma 4.8]{alexander24} (see also \cite[Lemma 6.11]{CPRS}).
\begin{lemma}\label{lem:psido}
Suppose that $V \in C_c^\infty(\R^n)$. For all $M, K \geq 0$ and $z \notin \sigma(H)$ we have the expansion
\begin{align*}
R(z)=(H-z)^{-1} &= \sum_{m=0}^M \left( \sum_{|k|=0}^K C_m(k) (-1)^{m+|k|} V^{(k_1)} \cdots V^{(k_m)} R_0(z)^{m+|k|+1} +P_{m,K}(z) \right) \\
& \quad + (-1)^{M+1} (R_0(z)V)^{M+1} R(z),
\end{align*}
where the remainder $P_{m,K}(z)$ is a pseudodifferential operator of order at most $-2m-K-3$. The combinatorial coefficients $C_m(k)$ are given by
\begin{align*}
C_m(k) &= \frac{(m+|k|)!}{k_1! \cdots k_m ! (k_1+1) (k_1+k_2+2) \cdots (|k|+m)}.
\end{align*}
\end{lemma}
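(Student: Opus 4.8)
The plan is to iterate the second resolvent identity and, within each term of the iteration, use an operator ``Taylor expansion'' to commute every factor of $V$ to the left past the free resolvents $R_0(z)$; the combinatorial constants $C_m(k)$ then emerge as products of binomial coefficients. Iterating $R(z) = R_0(z) - R_0(z)VR(z)$ exactly $M+1$ times gives
\begin{align*}
R(z) = \sum_{m=0}^M (-1)^m (R_0(z)V)^m R_0(z) + (-1)^{M+1}(R_0(z)V)^{M+1}R(z),
\end{align*}
which already produces the last term on the right and the outer structure $\sum_{m=0}^M (-1)^m(\cdots)$. It remains to expand each $(R_0(z)V)^m R_0(z)$, modulo a pseudodifferential operator of order at most $-2m-K-3$, as the displayed finite sum over $|k| \leq K$, the sign $(-1)^m$ being absorbed into $P_{m,K}(z)$.

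The engine is the commutator identity $[R_0(z),f] = -R_0(z)f^{(1)}R_0(z)$ for $f \in C_c^\infty(\R^n)$, which follows from $[H_0,f] = f^{(1)}$ and $R_0(z)(H_0-z) = (H_0-z)R_0(z) = \textup{Id}$. Iterating this on the remainder, and then inducting on $p$ using the hockey-stick identity $\sum_{l=0}^q \binom{p+l-1}{l} = \binom{p+q}{q}$, one obtains for every $p \geq 1$ and $L \geq 0$
\begin{align*}
R_0(z)^p f = \sum_{l=0}^L (-1)^l \binom{p+l-1}{l} f^{(l)} R_0(z)^{p+l} + \widetilde{P}_{p,L},
\end{align*}
where $\widetilde{P}_{p,L}$ is a pseudodifferential operator of order at most $-2p-L-1$. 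Applying this first to the leftmost $V$ in $(R_0(z)V)^m R_0(z)$, then to the second $V$---which by then stands immediately to the right of a fixed power $R_0(z)^{p_2}$, with only the already-processed factor $V^{(k_1)}$ to its left, so that no Leibniz terms arise---and so on, one finds after processing the $i$-th factor with $p_i = k_1 + \cdots + k_{i-1} + i$ that the coefficient picks up $(-1)^{k_i}\binom{k_1 + \cdots + k_i + i - 1}{k_i}$ and the trailing power of $R_0(z)$ becomes $R_0(z)^{|k|+m+1}$.

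It then remains to verify the purely algebraic identity
\begin{align*}
\prod_{i=1}^m \binom{k_1 + \cdots + k_i + i - 1}{k_i} = C_m(k).
\end{align*}
Setting $\tau_i = k_1 + \cdots + k_i + i$ one has $\binom{\tau_i - 1}{k_i} = (\tau_i - 1)!/(k_i!\,\tau_{i-1}!)$ since $\tau_i - 1 - k_i = \tau_{i-1}$, and with $\tau_0 = 0$ the product telescopes to $(\tau_m - 1)!/(k_1! \cdots k_m!\,\tau_1 \cdots \tau_{m-1})$; this equals $C_m(k)$ because $\tau_m = |k|+m$ and $\tau_1, \dots, \tau_m$ are precisely the factors $k_1+1,\ k_1+k_2+2,\ \dots,\ |k|+m$ appearing in the denominator of $C_m(k)$. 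Multiplying by the $(-1)^m$ from the resolvent iteration yields the stated coefficient $(-1)^{m+|k|}C_m(k)$.

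The step I expect to be the main obstacle is the remainder bookkeeping. One needs that $R_0(z)$ is pseudodifferential of order $-2$, that each $V^{(k)}$ is a differential operator of order $k$ with $C_c^\infty$ coefficients, and that orders are additive under composition---this is the calculus used in \cite{CPRS, alexander24}. Truncating the expansion of the $i$-th factor at $L_i = \max\{0,\,K - (k_1 + \cdots + k_{i-1})\}$, so that only finitely many commutations occur, makes the surviving fully-commuted terms exactly those with $|k| \leq K$, while each leftover remainder term, being of the form $V^{(k_1)}\cdots V^{(k_{i-1})}\,\widetilde{P}_{p_i,L_i}\,(R_0(z)V)^{m-i}R_0(z)$, has order at most $(k_1 + \cdots + k_{i-1}) + (-2p_i - L_i - 1) + (-2(m-i+1)) \leq -2m - K - 3$. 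Summing these finitely many terms (with the signs $(-1)^m$) defines $P_{m,K}(z)$ with the claimed order, which together with the combinatorial identity completes the proof.
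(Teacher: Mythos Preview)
The paper does not give its own proof of this lemma; it is quoted from \cite[Lemma 4.8]{alexander24} (and \cite[Lemma 6.11]{CPRS}). Your argument is correct and is exactly the method used in those references: iterate the second resolvent identity to produce the outer sum and the $(R_0(z)V)^{M+1}R(z)$ remainder, then in each term $(R_0(z)V)^mR_0(z)$ push the factors of $V$ to the left through the accumulated power $R_0(z)^{p_i}$ using $[R_0(z),f]=-R_0(z)f^{(1)}R_0(z)$, which yields the binomial coefficients $\binom{p_i+k_i-1}{k_i}$ whose product telescopes to $C_m(k)$; the leftover pieces at each stage are pseudodifferential of order at most $-2m-K-3$ by the order count you give.
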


Note that the operator $V^{(k_1)} \cdots V^{(k_m)}$ is a differential operator of order at most $|k|$ with smooth compactly supported coefficients and thus we may write
\begin{align}\label{eq:psido-notation}
V^{(k_1)} \cdots V^{(k_m)} &= \sum_{|\beta| = 0}^{|k|} g_{k,\beta} \partial^\beta,
\end{align} 
where the multi-indices $\beta$ are of length $n$ and $g_{k,\beta} \in C_c^\infty(\R^n)$.

We now recall the high-energy behaviour of the spectral shift function and its derivative \cite[Lemma 2.15, Theorem 4.15 and Remark 4.16]{alexander24}.

\begin{lemma}
\label{lem:consts-beta}
Suppose that $V \in C_c^\infty(\R^n)$. Then for $1 \leq \ell \leq \lfloor \frac{n}{2} \rfloor$ there exist coefficients $C_\ell(n,V), c_\ell(n,V), \beta_n(V)$ such that
\begin{align*}
0 &= \lim_{\lambda \to \infty} \left(-2\pi i \xi(\lambda) - 2\pi i \beta_n(V)-\sum_{\ell=1}^{\lfloor \frac{n-1}{2} \rfloor} C_\ell(n,V) \lambda^{\frac{n}{2}-\ell} \right) \\
&= \lim_{\lambda \to \infty} \left( -2\pi i\xi'(\lambda) - \sum_{\ell=1}^{\lfloor \frac{n-1}{2} \rfloor} c_\ell(n,V) \lambda^{\frac{n}{2}-\ell-1} \right).
\end{align*}
The coefficients are related by $c_\ell(n,V) = \left(\frac{n}{2}-\ell \right) C_\ell(n,V)$. For $1 \leq \ell \leq \lfloor \frac{n-1}{2} \rfloor$ and $M, K \in \N$ with $M+K \geq n$ we define the set
\begin{align*}
Q_{M,K}(\ell) &= \big\{(m,k,\beta) \in \{ 0,1, \dots, M\} \times \{0,1,\dots,K\}^m \times \{0,1, \dots, K\}^n : |\beta| \leq |k|, \\
&\quad \textup{ and } m+|k|+1- \frac{|\beta|}{2} = \ell \big\}.
\end{align*}
The coefficients $C_\ell(n,V)$ are given by
\begin{align}\label{eq:high-energy-coefficients}
C_\ell(n,V)&= \sum_{(m, k,\beta)\in Q_{M,K}(j)} \!\frac{(-1)^{m+|k|+1}(2\pi i)C_{m}(k) (-i)^{|\beta|} \Gamma\left( \frac{\beta_1+1}{2} \right) \cdots \Gamma \left(\frac{\beta_n+1}{2} \right)}{(m+1) (m+|k|)! \Gamma\left( \frac{n}{2} - m - |k|+\frac{|\beta|}{2} \right)(2\pi)^n}\! \int_{\R^n} \!V(x) g_{k,\beta}(x) \, \d x,
\end{align}
and
\begin{align}\label{eq:beta-defn}
\beta_n(V) &= \begin{cases} 
0, \quad & \textup{ if } n \textup{ is odd}, \\
\frac{1}{2\pi i} C_{\frac{n}{2}}(n,V), \quad & \textup{ if } n \textup{ is even}.
\end{cases}
\end{align}
\end{lemma}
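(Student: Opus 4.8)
The plan is to extract both expansions directly from the pseudodifferential expansion of the resolvent in Lemma~\ref{lem:psido}, using the defining property of the spectral shift function in Theorem~\ref{thm:ssf-properties} to turn asymptotics of resolvent traces as $z\to-\infty$ into asymptotics of $\xi$ as $\lambda\to+\infty$. Fix an integer $N>n/2$ and a parameter $z<0$ below $\sigma(H)$; then $f_z(\lambda)=(\lambda-z)^{-N}$ satisfies the growth condition~\eqref{eq:ssf-condition}, so Theorem~\ref{thm:ssf-properties} gives
\begin{align*}
\Tr\big((H-z)^{-N}-(H_0-z)^{-N}\big) = -N\int_\R \frac{\xi(\lambda)}{(\lambda-z)^{N+1}}\,\d\lambda ,
\end{align*}
and the part of this integral over $\lambda<0$ (where $\xi$ is a bounded, compactly supported step function) is $O\big((-z)^{-N-1}\big)$, hence negligible for the orders below. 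I would first invoke the known fact (this is \cite[Lemma 2.15]{alexander24}, resting on the symbolic structure of $S(\lambda)$ as $\lambda\to\infty$ and the absence of logarithmic terms for $V\in C_c^\infty(\R^n)$) that $\xi$ and $\xi'$ possess complete asymptotic expansions $\xi(\lambda)\sim\sum_{\ell\ge1}\tilde C_\ell\,\lambda^{\frac n2-\ell}$ and $\xi'(\lambda)\sim\sum_{\ell\ge1}(\tfrac n2-\ell)\tilde C_\ell\,\lambda^{\frac n2-\ell-1}$, so that the relation $c_\ell(n,V)=(\tfrac n2-\ell)C_\ell(n,V)$ is immediate and it remains to identify the $\tilde C_\ell$. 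Inserting the expansion of $\xi$ on the right and using the Beta integral $\int_0^\infty\lambda^{\frac n2-\ell}(\lambda-z)^{-N-1}\,\d\lambda=(-z)^{\frac n2-\ell-N}\,\Gamma(\tfrac n2-\ell+1)\Gamma(N-\tfrac n2+\ell)\big/\Gamma(N+1)$, one sees that $C_\ell(n,V)=-2\pi i\,\tilde C_\ell$ is read off from the coefficient of $(-z)^{\frac n2-\ell-N}$ in the asymptotic expansion of the left-hand trace; $\beta_n(V)$ is the $\ell=\tfrac n2$ coefficient, which occurs only for even $n$, since for odd $n$ every exponent $\tfrac n2-\ell$ is a half-integer.

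Next I would expand that trace using Lemma~\ref{lem:psido}. From the telescoping identity $(H-z)^{-N}-(H_0-z)^{-N}=-\sum_{i=0}^{N-1}R_0(z)^{i+1}V\,R(z)^{N-i}$, writing each $R(z)^{N-i}$ as a $z$-derivative of $R(z)$, substituting the expansion of Lemma~\ref{lem:psido}, differentiating termwise and using cyclicity of the trace, one finds the resolvent power to be $m+|k|+N+1$ regardless of $i$, so the sum over $i$ produces only explicit constants (whose $N$-dependence cancels at the end) and
\begin{align*}
\Tr\big((H-z)^{-N}-(H_0-z)^{-N}\big) = \sum_{m,k} a_{m,k}\,\Tr\big(V\,V^{(k_1)}\cdots V^{(k_m)}\,R_0(z)^{m+|k|+N+1}\big) + (\textup{remainder}),
\end{align*}
the remainder gathering the contributions of the $P_{m,K}(z)$ and of the tail $(-1)^{M+1}(R_0(z)V)^{M+1}R(z)$. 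Using \eqref{eq:psido-notation} to write $V\,V^{(k_1)}\cdots V^{(k_m)}=\sum_{|\beta|\le|k|}V g_{k,\beta}\partial^\beta$, and the facts that $R_0(z)$ is the Fourier multiplier by $(|\xi|^2-z)^{-1}$ and that the trace of a trace-class pseudodifferential operator is $(2\pi)^{-n}$ times the integral of its symbol over phase space, each surviving trace equals $(2\pi)^{-n}\big(\int_{\R^n}V(x)g_{k,\beta}(x)\,\d x\big)\int_{\R^n}(i\xi)^\beta(|\xi|^2-z)^{-(m+|k|+N+1)}\,\d\xi$. The momentum integral vanishes unless every $\beta_j$ is even, and otherwise equals $\prod_j\Gamma\big(\tfrac{\beta_j+1}{2}\big)\,\Gamma\big(m+|k|+N+1-\tfrac{n+|\beta|}{2}\big)\big/\Gamma(m+|k|+N+1)$ times $(-z)^{\frac{n+|\beta|}{2}-(m+|k|+N+1)}=(-z)^{\frac n2-\ell-N}$ with $\ell=m+|k|+1-\tfrac{|\beta|}{2}$ --- exactly the constraint defining $Q_{M,K}(\ell)$. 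Summing the contributions of all $(m,k,\beta)\in Q_{M,K}(\ell)$ and comparing with the $\tilde C_\ell$ term of the previous paragraph --- the factor $\Gamma(N-\tfrac n2+\ell)$ cancelling and the spurious $N$-dependence disappearing --- yields the closed form \eqref{eq:high-energy-coefficients}, independent of $N$ and of $M,K$ once $M+K\ge n$; specialising to even $n$ and $\ell=\tfrac n2$ gives \eqref{eq:beta-defn}.

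The main obstacle is twofold. First, the remainder control: one must show that after the $z$-derivatives and the trace, $P_{m,K}(z)$ (a pseudodifferential operator of order $\le-2m-K-3$) and the tail $(-1)^{M+1}(R_0(z)V)^{M+1}R(z)$ contribute strictly below the order $(-z)^{\frac n2-\lfloor n/2\rfloor-N}$, uniformly as $z\to-\infty$; this requires $z$-dependent trace-norm estimates for pseudodifferential operators of sufficiently negative order together with a Neumann-series bound for the tail. Second --- and this is the genuinely non-elementary ingredient, rather than a computation --- one needs the a priori existence of logarithm-free asymptotic expansions of $\xi$ and $\xi'$: the transform matching above recovers their coefficients but not their mere existence, without which only averaged information about $\xi$ would be available. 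The remaining work --- the telescoping bookkeeping, the identification of $Q_{M,K}(\ell)$, and tracking the constants $C_m(k)$, the powers $(-i)^{|\beta|}$ and the Gamma factors --- is intricate but routine.
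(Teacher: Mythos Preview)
The paper does not give a proof of this lemma: it is quoted verbatim from \cite[Lemma~2.15, Theorem~4.15 and Remark~4.16]{alexander24}, so there is no in-paper argument to compare against. Your sketch is therefore being measured against the (unseen) proof in \cite{alexander24}, and in broad strokes it is the right mechanism: feed resolvent powers into the trace identity~\eqref{eq:ssf-defining}, expand via Lemma~\ref{lem:psido}, evaluate the momentum integral with Lemma~\ref{lem:folland}, and read off the power $(-z)^{\frac n2-\ell-N}$, whose exponent pins down the constraint $\ell=m+|k|+1-\tfrac{|\beta|}{2}$ defining $Q_{M,K}(\ell)$. You are also right that the genuine non-elementary input is the \emph{existence} of a log-free asymptotic expansion of $\xi$; you import this from \cite[Lemma~2.15]{alexander24}, which is part of the very block of results the present paper is quoting, so your argument is a derivation of the explicit coefficients \emph{given} existence rather than an independent proof of the full statement.

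There is, however, a concrete combinatorial mismatch you have glossed over. The specific formula~\eqref{eq:high-energy-coefficients} carries a factor $\tfrac{1}{(m+1)(m+|k|)!}$. The denominator $(m+1)$ is the fingerprint of the derivation via
\[
\Tr\big(f(H)-f(H_0)\big)=\int_0^1 \Tr\big(V\,f'(H_0+tV)\big)\,\d t,
\]
in which the $m$-th term of Lemma~\ref{lem:psido} for $R_t$ picks up $t^m$ and hence $\int_0^1 t^m\,\d t=\tfrac{1}{m+1}$ --- exactly the route used in the paper's own Proposition~\ref{prop:one-form-res}. Your telescoping route $R^N-R_0^N=-\sum_{i}R_0^{i+1}VR^{N-i}$, after writing $R^{N-i}=\tfrac{1}{(N-i-1)!}\partial_z^{N-i-1}R$, inserting Lemma~\ref{lem:psido} and using cyclicity, produces instead (via the hockey-stick identity) the factor $\binom{m+|k|+N}{m+|k|+1}$, which after cancellation against the momentum integral leaves $\tfrac{1}{(m+|k|+1)!}$ rather than $\tfrac{1}{(m+1)(m+|k|)!}$. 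These agree only when $|k|=0$. The two expressions for $C_\ell(n,V)$ are of course equal as numbers (they compute the same expansion coefficient of $\xi$), but establishing equality of the two \emph{formulae} term-by-term requires trace identities of the type $\Tr\big(V^{(1)}V\,R_0^p\big)=-\Tr\big(V\,V^{(1)}R_0^p\big)$ coming from cyclicity and $[H_0,R_0]=0$, summed over $Q_{M,K}(\ell)$. So either switch to the $\int_0^1\,\d t$ derivation, which delivers~\eqref{eq:high-energy-coefficients} on the nose, or be explicit that your telescoping yields an equivalent but differently-organised closed form and supply the reconciling identity; as written, the sentence ``yields the closed form~\eqref{eq:high-energy-coefficients}'' is not justified.
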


\begin{defn}\label{defn:high-energy-poly1}
Define the functions $P_n,p_n :(0,\infty) \to \C$ by
\begin{align*}
P_n(\lambda) &= 2\pi i\beta_n(V)+\sum_{\ell=1}^{\lfloor \frac{n-1}{2} \rfloor} C_\ell(n,V) \lambda^{\frac{n}{2}-\ell}, \\
p_n(\lambda) &=  \sum_{\ell=1}^{\lfloor \frac{n-1}{2} \rfloor} c_\ell(n,V) \lambda^{\frac{n}{2}-\ell-1} = P_n'(\lambda).
\end{align*}
We call $P_n$ the high-energy polynomial for $\xi$ and $p_n$ the high-energy polynomial for $\xi'$.
\end{defn}
\begin{rmk}
Recall the spectral shift functions $\xi, \xi_\alpha$ for the pairs $(H,H_0)$ and $(H(\alpha), H_0(\alpha))$. Since $\xi_\alpha(\lambda) = \xi(\lambda-\alpha)$ for all almost all $\lambda \in \R$ we have that the high-energy polynomial for $\xi_\alpha$ is $P_n(\cdot-\alpha)$ and likewise for $\xi'$ and $p_n(\cdot-\alpha)$.
\end{rmk}

We can explicitly compute the lowest order polynomials (see \cite{banuelos96, colin81}), finding $P_1 = 0$, and
\begin{align*}
P_2(\lambda) &= -\frac{ (2\pi i) \textup{Vol}(\Sf^1)}{2(2\pi)^2} \int_{\R^2} V(x) \, \d x = -\frac{2\pi i}{4\pi} \int_{\R^2} V(x) \, \d x, \\
P_3(\lambda) &= -\frac{(2\pi i) \lambda^\frac12 \textup{Vol}(\Sf^2)}{2(2\pi)^3} \int_{\R^3} V(x) \, \d x = -\frac{(2\pi i)\lambda^\frac12}{4 \pi^2} \int_{\R^3} V(x)\, \d x, \\
P_4(\lambda) &= - \frac{(2\pi i) \lambda \textup{Vol}(\Sf^3)}{2(2\pi)^4} \int_{\R^4} V(x)\,  \d x + \frac{(2\pi i) \textup{Vol}(\Sf^3)}{4(2\pi)^4} \int_{\R^4} V(x)^2\, \d x .
\end{align*}

The integrability properties of the derivative of the spectral shift function on $\R^+$ are well-known, see \cite[Theorem 5.2]{jia12} and \cite[Lemma 4.12]{alexander24}.

\begin{lemma}\label{lem:ssf-L1}
Suppose that $V \in C_c^\infty(\R^n)$. Then the function $\textup{Tr}\left(S(\cdot)^*S'(\cdot) \right) - p_n$ is integrable on $\R^+$. In particular, if $n = 1,2$ we have $\textup{Tr}\left(S(\cdot)^*S'(\cdot) \right) \in L^1(\R^+)$.
\end{lemma}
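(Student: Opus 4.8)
The plan is to convert the assertion into one about the spectral shift function and then estimate separately near $\lambda=0$ and near $\lambda=\infty$. By the relation $\textup{Tr}(S_\alpha(\lambda)^*S_\alpha'(\lambda))=-2\pi i\,\xi_\alpha'(\lambda)$ of Theorem~\ref{thm:ssf-properties}, taken at $\alpha=0$, we have $\textup{Tr}(S(\lambda)^*S'(\lambda))=-2\pi i\,\xi'(\lambda)$ for every $\lambda\in(0,\infty)$, while $p_n=P_n'$ by Definition~\ref{defn:high-energy-poly1}; so it suffices to show $-2\pi i\,\xi'-p_n\in L^1(0,\infty)$. Since $\xi\in C^1(0,\infty)$ by Theorem~\ref{thm:ssf-properties}, the function $-2\pi i\,\xi'-p_n$ is continuous, hence bounded, on every compact subinterval of $(0,\infty)$, so the only possible obstructions to integrability are at $\lambda\to 0^+$ and at $\lambda\to\infty$.

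For $\lambda\to 0^+$ pass to the momentum variable $k=\lambda^{1/2}$. For $V\in C_c^\infty(\R^n)$ the scattering matrix $S(k^2)$ is, away from the threshold subspace, smooth in $k$ near $k=0$, and the threshold contribution --- the jump of $\xi$ at $0$ together with the half-bound state and zero-energy eigenvalue terms --- produces at worst an integrable singularity of $\xi'$, of order $\lambda^{-1/2}$ coming from the Jacobian $\d k/\d\lambda$, with only milder (still integrable) corrections in low dimensions or in the presence of a zero-energy resonance. As $p_n$ is bounded near $0$, this yields $-2\pi i\,\xi'-p_n\in L^1(0,1]$; the detailed threshold analysis is carried out in \cite{jia12} and \cite[Lemma 4.12]{alexander24}.

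For $\lambda\to\infty$ we upgrade Lemma~\ref{lem:consts-beta} from a statement about limits to one with a decay rate. Apply the defining relation \eqref{eq:ssf-defining} to functions $f$ built from $(\mu-z)^{-m}$ with $m>\tfrac{n}{2}$, which satisfy the growth condition \eqref{eq:ssf-condition} and make $(H-z)^{-m}-(H_0-z)^{-m}$ trace class, and insert the pseudodifferential expansion of Lemma~\ref{lem:psido} with truncation parameters $M,K$ taken well above the minimal value $M+K\ge n$ needed to define $p_n$. The traces of the individual terms $V^{(k_1)}\cdots V^{(k_m)}R_0(z)^{m+|k|+1}$ are computed explicitly, as in \eqref{eq:high-energy-coefficients}, and the traces of the remainders $P_{m,K}(z)$ and $(R_0(z)V)^{M+1}R(z)$ are $O(z^{-L})$ for $L$ arbitrarily large. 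A Stieltjes inversion then continues the asymptotic expansion of $-2\pi i\,\xi(\lambda)$ beyond the polynomial $P_n(\lambda)$, and differentiating it term by term gives $-2\pi i\,\xi'(\lambda)-p_n(\lambda)=O(\lambda^{-1-\varepsilon})$ as $\lambda\to\infty$ for some $\varepsilon>0$ (indeed $\varepsilon=\tfrac12$ works), which is integrable on $[1,\infty)$. Combining the two endpoints gives $\textup{Tr}(S(\cdot)^*S'(\cdot))-p_n\in L^1(\R^+)$; and for $n=1,2$ we have $\lfloor\tfrac{n-1}{2}\rfloor=0$, hence $p_n\equiv 0$ and $\textup{Tr}(S(\cdot)^*S'(\cdot))\in L^1(\R^+)$.

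The genuine work is in this last step: Lemma~\ref{lem:consts-beta} only asserts that the relevant limits vanish, so to gain integrability near infinity one must carry the resolvent expansion of Lemma~\ref{lem:psido} a full order beyond what is needed for $p_n$ and control the resulting pseudodifferential remainders uniformly as $z$ approaches the positive real axis --- essentially the estimates of \cite[Section 4]{alexander24}. The near-zero analysis, by contrast, is a standard threshold computation, and the middle region is handled by plain continuity.
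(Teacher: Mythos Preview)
The paper does not supply its own proof of this lemma: it is stated with a pointer to \cite[Theorem 5.2]{jia12} and \cite[Lemma 4.12]{alexander24}. Your proposal is therefore not competing with an argument in the paper but is a sketch of what those references accomplish, and at that level it is the correct outline: reduce to $-2\pi i\xi'-p_n$, treat $\lambda\to 0^+$ via threshold analysis of $S(k^2)$ in $k=\lambda^{1/2}$, and treat $\lambda\to\infty$ by carrying the resolvent expansion of Lemma~\ref{lem:psido} one step beyond the polynomial $P_n$.

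Two places in your sketch are not yet proofs. First, the near-zero claim ``at worst $\lambda^{-1/2}$, with only milder integrable corrections'' is dimension- and resonance-dependent, and in even dimensions the threshold expansion carries logarithms; you correctly defer to \cite{jia12} and \cite[Lemma 4.12]{alexander24}, but the sentence as written would not stand alone. Second, the high-energy step needs more than Lemma~\ref{lem:consts-beta}: an asymptotic expansion of $\xi$ cannot be differentiated termwise without an a priori bound on the remainder of $\xi'$. In \cite{alexander24} this is handled by deriving the expansion for $\xi'$ directly from the trace of $V R_0(z)^{m+|k|+1}$ terms and controlling the pseudodifferential remainders uniformly as $z$ approaches $(0,\infty)$ (the limiting-absorption estimates), which yields $-2\pi i\xi'(\lambda)-p_n(\lambda)=O(\lambda^{-3/2})$ for $n$ odd and $O(\lambda^{-2})$ for $n$ even. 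Your ``Stieltjes inversion then differentiate'' phrasing hides exactly this point. With those two ingredients supplied from the cited references, your argument matches the intended one.
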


We now define zero-energy resonances, a low-energy phenomena known to provide obstructions to generic behaviour in scattering theory in low dimensions.

\begin{defn}\label{defn:resonances}
Suppose that $V \in C_c^\infty(\R^n)$. If $n \neq 2$ we say there is a resonance if there exists a non-zero bounded distributional solution to $H\psi = 0$. If $n = 2$ we say there is a $p$-resonance if there exists a non-zero distributional solution $\psi$ to $H\psi = 0$ with $\psi \in L^q(\R^2) \cap L^\infty(\R^2)$ for some $q > 2$. We say that there is an $s$-resonance if there exists a non-zero bounded distributional solution $\psi$ to $H\psi = 0$ with $\psi \notin L^q(\R^2)$ for all $q < \infty$.
\end{defn}
General bounds on the resolvent of $H$ \cite{jensen80} show that there can be no resonances for dimension $n \geq 5$.

We now recall the value of the spectral shift function at zero in all dimensions from \cite[Corollary 5.11]{alexander24}.
\begin{thm}\label{thm:spec-shift-values}
Suppose $V \in C_c^\infty(\R^n)$. Then the value of the spectral shift function at zero is given by $\xi(0+) = -N-N_{res}$, where $N_{res} = 0$ unless
\begin{align*}
N_{res} &= \begin{cases}
\frac12, \quad & \textup{ if } n = 1 \textup{ and there are no resonances}, \\
N_p, \quad & \textup{ if } n = 2 \textup{ and there are } N_p =0,1, 2 \textup{ } p\textup{-resonances}, \\
\frac12, \quad & \textup{ if } n = 3 \textup{ and there are resonances}, \\
1, \quad & \textup{ if } n = 4 \textup{ and there are resonances}.
\end{cases}
\end{align*}
\end{thm}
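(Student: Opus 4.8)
The plan is to reduce the computation of $\xi(0+)$ to the low-energy behaviour of the resolvent $R(z)=(H-z)^{-1}$ near $z=0$; this is genuinely new information, not contained in the high-energy and spectral-flow analysis of the rest of the paper. By Theorem \ref{thm:ssf-properties} the spectral shift function equals the step function $-\sum_k M(\lambda_k)\chi_{[\lambda_k,\infty)}$ on $(-\infty,0)$, so $\xi(0-)=-(N-N_0)$ is already known, and the assertion $\xi(0+)=-N-N_{res}$ is equivalent to the statement that the downward jump $\xi(0-)-\xi(0+)$ equals $N_0+N_{res}$.

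First I would fix the branch of $\xi$ on $(0,\infty)$: by Theorem \ref{thm:ssf-properties} and Lemma \ref{lem:consts-beta}, $\xi$ is continuous there, satisfies $-2\pi i\,\xi(\lambda)=P_n(\lambda)+o(1)$ as $\lambda\to\infty$, and $\det S(\lambda)=\e^{-2\pi i\,\xi(\lambda)}$. Then, substituting the classical low-energy resolvent expansions — Jensen--Kato for $n=3$, Jensen for $n\ge 5$ and separately for $n=4$, and the expansions of Boll\'e, Gesztesy and collaborators together with the unified threshold analysis of Jensen--Nenciu for $n=1,2$ — into the stationary formula for the scattering matrix in terms of $R(\lambda+i0)$, I would extract two things: (i) the limit $S(0+):=\lim_{\lambda\to 0^+}S(\lambda)$, which gives $\det S(0+)=\e^{-2\pi i\,\xi(0+)}$ and hence determines $\xi(0+)$ modulo $\tfrac12\Z$; and (ii) the precise local behaviour of $\arg\det S(\lambda)$ as $\lambda\to 0^+$, which together with continuity on $(0,\infty)$ and the normalisation at $+\infty$ fixes $\xi(0+)$ exactly. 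Equivalently one may argue through a (modified) perturbation determinant $D(z)$ of $\mathrm{Id}+VR_0(z)$: $\xi$ is a boundary value of $\tfrac{1}{\pi}\arg D$, and the jump of $\xi$ at $0$ is, up to sign, the order of vanishing of $D$ at $z=0$, which splits as $N_0$ coming from the rank-$N_0$ pole $-z^{-1}P_0$ of $R(z)$ plus a resonance contribution.

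The proof is then completed case by case. For $n\ge 5$ there are no resonances (Jensen's resolvent bounds), $S(0+)=\mathrm{Id}$, the only singular term is the zero-eigenvalue pole, and $N_{res}=0$. For $n=3$ a resonance yields $S(0+)=\mathrm{Id}-2\ketbra{\phi}{\phi}$ for a unit vector $\phi\in\Pp$ (the s-wave resonance being radial, $\phi$ is constant on $\Sf^2$), so $\det S(0+)=-1$ and $N_{res}=\tfrac12$; for $n=1$ it is instead the \emph{absence} of a half-bound state that produces total reflection $\det S(0+)=-1$ and hence $N_{res}=\tfrac12$, while a half-bound state restores $\det S(0+)=1$. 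For $n=4$ a resonance introduces $\log\lambda$ terms in the expansion that contribute a full extra unit, $N_{res}=1$. The case $n=2$ is the most delicate: logarithmic thresholds are generic, and one must separate the $p$-resonances, each contributing $1$, from an $s$-resonance, which contributes $0$, to arrive at $N_{res}=N_p$.

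I expect the main obstacle to be precisely the low-dimensional cases $n\le 4$, and above all $n=2$: establishing the low-energy expansions of $R(z)$ and $S(\lambda)$ with enough precision to read off both $\det S(0+)$ and the exact value of $\xi(0+)$ (not merely its residue mod $\tfrac12\Z$), and correctly bookkeeping the $s$-versus-$p$-resonance dichotomy. Once one has the absence of resonances in high dimensions and the structure of the zero-eigenvalue pole, the cases $n\ge 5$ are comparatively routine.
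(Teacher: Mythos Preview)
The paper does not actually prove this statement: it simply recalls it from \cite[Corollary 5.11]{alexander24}, and remarks immediately afterwards that in that reference the result is obtained \emph{as a corollary of Levinson's theorem}, while in odd dimensions a direct proof via perturbation determinants is available (citing Levinson, Newton, and Guillop\'e). So there is no in-paper argument to compare against; Theorem~\ref{thm:spec-shift-values} is treated as an imported black box that feeds into Lemma~\ref{lem:kernel-vanish} and ultimately Theorem~\ref{thm:levinson}.

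Your proposal, by contrast, sketches a direct low-energy proof via the threshold resolvent expansions (Jensen--Kato, Jensen, Jensen--Nenciu, Boll\'e--Gesztesy et al.) and the stationary formula for $S(\lambda)$, with the perturbation-determinant route mentioned as an alternative. This is a genuinely different and, in the context of the present paper, arguably preferable approach: since the cited proof in \cite{alexander24} relies on Levinson's theorem, taking Theorem~\ref{thm:spec-shift-values} on faith and then using it to reprove Levinson's theorem via spectral flow looks circular on the surface (it is not literally circular, because \cite{alexander24} has an independent proof of Levinson, but the logical dependence is awkward). Your direct argument would sever that dependence. The case analysis you outline is broadly correct, and your identification of $n=2$ and $n=4$ (logarithmic thresholds, the $s$- versus $p$-resonance bookkeeping) as the delicate cases is accurate. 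The main thing your sketch underplays is step (ii): knowing $\det S(0+)$ only gives $\xi(0+)$ modulo $\Z$ (or $\tfrac12\Z$), and pinning down the integer part requires tracking $\xi$ continuously from the high-energy normalisation of Lemma~\ref{lem:consts-beta} down to $0^+$, which in even dimensions means controlling the $\beta_n(V)$ term and in all dimensions means a genuine continuity/monotonicity argument rather than just a limit computation. That is where the real work in a self-contained proof would lie.
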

We note that the proof of Theorem \ref{thm:spec-shift-values} in \cite{alexander24} is as a corollary of Levinson's theorem, however the result can be obtained directly using perturbation determinant methods in odd dimensions (see \cite{levinson49}, \cite{newton77} and \cite[Theorem 3.3]{guillope81}).


\section{Spectral flow for Schr\"{o}dinger operators}\label{sec:spec-flow}
In this section we analyse the spectral flow formula of Theorem \ref{thm:cpotsuk} applied to the path $H_t(\alpha)$ by making a particular choice of the function $g$ and then taking residues. 

Define for $\textup{Re}(s) > \frac12$ the constants $C_s =\int_\R (1+u^2)^{-s} \, \d u$. The functions $s \mapsto C_s$ have a pole at $s =\frac12$ with residue equal to one.  For $\textup{Re}(s) > \frac12$ we define the eta function $\eta_s: \R \to \C$ by
\begin{align*}
\eta_s(x) &= \frac{1}{C_s} \int_1^\infty x(1+wx^2)^{-s} w^{-\frac12} \, \d w = \frac{2}{C_s} \int_x^\infty (1+v^2)^{-s} \, \d s,
\end{align*}
where the second expression is valid only for $x > 0$.
We can now use the function $\eta_s$ to obtain a useful form of Theorem \ref{thm:cpotsuk}.
\begin{lemma}
\label{lem:spec-flow-residue}
Let $[0,1] \ni t \mapsto D_t$ be a piecewise $C_\Gamma^1$ path of linear operators with $\dot{D}_t(1+D_t^2)^{-s}$ trace-class for all $s>\frac{n}{4}$.
Then
\begin{align}\label{eq:spec-flow-residue}
&\textup{sf}(D_t) = \nonumber \\
& \Res{s=\frac12} \left( \int_0^1 \Tr \left( \dot{D}_t (\textup{Id}+D_t^2)^{-s} \right) \d t +\frac{C_s}{2} \Tr \left( \eta_s(D_1)-\eta_s(D_0)+P_{\textup{Ker}(D_1)} - P_{\textup{Ker}(D_0)} \right) \right).
\end{align}
\end{lemma}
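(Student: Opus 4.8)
The plan is to apply Theorem~\ref{thm:cpotsuk} to the path $(D_t)$ with the one-parameter family of weights $g=g_s$, where $g_s(x)=C_s^{-1}(1+x^2)^{-s}$ for real $s>\max\{n/4,\tfrac12\}$, and then to multiply by $C_s$ and take the residue at $s=\tfrac12$. The three hypotheses of Theorem~\ref{thm:cpotsuk} hold for this $g_s$: hypothesis (1) is built into the normalisation $C_s=\int_\R(1+u^2)^{-s}\,\d u$; hypothesis (2), namely $\int_0^1\norm{\dot{D}_t g_s(D_t)}_1\,\d t<\infty$, is exactly the standing trace-class assumption of the lemma after extracting the constant $C_s$, together with continuity in trace norm of $t\mapsto\dot{D}_t(1+D_t^2)^{-s}$ along the piecewise $C_\Gamma^1$ path; and hypothesis (3) follows from (2) and the Fredholm property by the usual Duhamel estimate for $\eta_s(D_1)-\eta_s(D_0)$ (as in \cite{CPotSuk}), the finite-rank kernel projections being harmless.

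The only computational step is to rewrite the boundary term of \eqref{eq:ceepotsuk} in terms of $\eta_s$. I would first identify the antiderivative of $g_s$ normalised by $G_s(\pm\infty)=\pm\tfrac12$, namely $G_s(x)=C_s^{-1}\int_0^x(1+v^2)^{-s}\,\d v$. Writing $\int_0^x=\int_0^\infty-\int_x^\infty$ and using the second integral formula for $\eta_s$ gives $G_s(x)=\tfrac12-\tfrac12\eta_s(x)$ for $x>0$, and since $G_s$ and $\eta_s$ are both odd with $G_s(0)=\eta_s(0)=0$ this becomes $G_s(x)=\tfrac12\,\textup{sgn}(x)-\tfrac12\eta_s(x)$ for $x\neq 0$. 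Applying the functional calculus to each $D_j$ and noting that $B_j=2\chi_{[0,\infty)}(D_j)-1$ agrees with $\textup{sgn}(D_j)$ on $(\Ker D_j)^\perp$ but equals $+\textup{Id}$ on $\Ker D_j$, where $\eta_s$ and $G_s$ both vanish, one obtains $G_s(D_j)-\tfrac12 B_j=-\tfrac12\eta_s(D_j)-\tfrac12 P_{\Ker(D_j)}$, and hence
\[
-\Tr\!\left(G_s(D_1)-\tfrac12 B_1-G_s(D_0)+\tfrac12 B_0\right)=\tfrac12\Tr\!\left(\eta_s(D_1)-\eta_s(D_0)+P_{\Ker(D_1)}-P_{\Ker(D_0)}\right).
\]
Substituting this, together with $\int_0^1\Tr\!\big(\dot{D}_t g_s(D_t)\big)\,\d t=C_s^{-1}\int_0^1\Tr\!\big(\dot{D}_t(1+D_t^2)^{-s}\big)\,\d t$, into \eqref{eq:ceepotsuk} gives, for all real $s>\max\{n/4,\tfrac12\}$,
\[
\textup{sf}(D_t)=\frac{1}{C_s}\int_0^1\Tr\!\left(\dot{D}_t(1+D_t^2)^{-s}\right)\d t+\frac12\Tr\!\left(\eta_s(D_1)-\eta_s(D_0)+P_{\Ker(D_1)}-P_{\Ker(D_0)}\right).
\]

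Finally I would multiply through by $C_s$: the left-hand side becomes $C_s\,\textup{sf}(D_t)$, which — since $\textup{sf}(D_t)$ is a constant and $C_s$ has a simple pole of residue $1$ at $s=\tfrac12$ — is meromorphic near $s=\tfrac12$ with residue there equal to $\textup{sf}(D_t)$. The right-hand side is a priori defined only for $\textup{Re}(s)>\max\{n/4,\tfrac12\}$, but it agrees there with $C_s\,\textup{sf}(D_t)$ and so inherits this meromorphic continuation; applying $\Res{s=1/2}$ to both sides then yields exactly \eqref{eq:spec-flow-residue}.

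I expect the only real obstacles to be: (i) the bookkeeping with the kernel projections in the boundary term — keeping track of the mismatch between $\chi_{[0,\infty)}(0)=1$ and the vanishing of $\eta_s$ and $G_s$ at $0$, which is precisely what produces the $P_{\Ker(D_j)}$ terms — and (ii) legitimising the residue at $s=\tfrac12$ of an expression directly defined only for large $\textup{Re}(s)$, which is handled by recognising the rescaled identity $C_s\,\textup{sf}(D_t)=(\cdots)$. One should also remember the sign correction to \eqref{eq:ceepotsuk} flagged just before Theorem~\ref{thm:cpotsuk}.
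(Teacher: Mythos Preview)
Your proposal is correct and follows essentially the same route as the paper: choose $g_s(x)=C_s^{-1}(1+x^2)^{-s}$, identify $G_s$ with $\tfrac12\,\textup{sgn}-\tfrac12\eta_s$ away from zero, substitute into Theorem~\ref{thm:cpotsuk}, multiply by $C_s$, and use that $C_s\,\textup{sf}(D_t)$ is meromorphic with residue $\textup{sf}(D_t)$ at $s=\tfrac12$ to justify taking residues. You are in fact more explicit than the paper about the kernel-projection bookkeeping in the boundary term and about verifying the hypotheses of Theorem~\ref{thm:cpotsuk}.
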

\begin{proof}
For $s > \frac n4$, let $g_s : \R \to \R$ be given by $g_s(x) = C_s^{-1}(1+x^2)^{-s}$. Note that the antiderivative $G_s$ of $g_s$ with $G(\pm \infty) = \pm \frac12$ is given by
\begin{align*}
G_s(x) &= -\frac12 + \frac{1}{C_s} \int_{-\infty}^x (1+u^2)^{-s} \, \d u.
\end{align*}
The function $g_s$ is even and so $G_s$ is odd. For $x > 0$ we have $G_s(x) = \frac12-\frac12 \eta_s(x)$, while for $x<0$ we have $G_s(x)=-\frac12-\frac12 \eta_s(x)$. Thus applying Theorem \ref{thm:cpotsuk} to $g_s$, $G_s$ and multiplying both sides by $C_s$ yields
\begin{align}\label{eq:meromorphic}
C_s \textup{sf}(D_t) &= \int_0^1 \Tr \left( \dot{D}_t (\textup{Id}+D_t^2)^{-s} \right) \d t + \frac{C_s}{2} \Tr \left( \eta_s(D_1)-\eta_s(D_0)+P_{\textup{Ker}(D_1)} - P_{\textup{Ker}(D_0)} \right).
\end{align}
The left-hand side of Equation \eqref{eq:meromorphic} is a meromorphic function of $s$ with a simple pole at $s = \frac12$ and thus defines a meromorphic continuation of the right-hand side of Equation \eqref{eq:meromorphic} with a simple pole at $s = \frac12$. As a result, taking the residue at $s = \frac12$ gives Equation \eqref{eq:spec-flow-residue}.
\end{proof}
Equation \eqref{eq:spec-flow-residue} is the starting point for our analysis of the spectral flow along the path $H_t(\alpha)$. There are two separate types of terms to be considered. The first is  the ``integral of one-form'' term which is evaluated in Section \ref{sec:one-form} using the pseudodifferential expansion of Lemma \ref{lem:psido} and the second is the $\eta$ contribution which is evaluated in Section \ref{sec:eta-term} using the Birman-Kre\u{\i}n trace formula.

\subsection{The ``integral of one form'' term}\label{sec:one-form}

We use the pseudodifferential expansion of Lemma \ref{lem:psido} to compute an expansion for the integral of one form term in Theorem \ref{thm:cpotsuk}. After a fixed number of terms (depending on the dimension $n$) the remainder term will be holomorphic at $s = \frac12$ and can be discarded. We begin with a residue computation.

\begin{lemma}\label{lem:even-one-form-residue}
For $\ell \in \N$, $\alpha > 0$ we have
\begin{align*}
\mathop{\textup{Res}}_{s = \frac12} \left( \int_0^\infty u^{\ell-1} (1+(u+\alpha)^2)^{-s} \, \d u \right) &= \mathop{\sum_{j=0}^{\ell-1}}_{j \textup{ even}} \binom{\ell-1}{j} \frac{(-1)^{\ell-\frac{j}{2}-1} \alpha^{\ell-j-1} \Gamma\left(\frac{j+1}{2}\right)}{4\Gamma\left(\frac{j}{2}+1\right)\Gamma\left(\frac12 \right)}.
\end{align*}
\end{lemma}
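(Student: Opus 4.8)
The plan is to reduce the residue computation to a known meromorphic structure of the beta-type integral $\int_0^\infty v^{a}(1+v^2)^{-s}\,\d v$. First I would substitute $v = u+\alpha$, so that the integral becomes $\int_\alpha^\infty (v-\alpha)^{\ell-1}(1+v^2)^{-s}\,\d v$, and expand the polynomial $(v-\alpha)^{\ell-1} = \sum_{j=0}^{\ell-1}\binom{\ell-1}{j}(-\alpha)^{\ell-1-j}v^j$ by the binomial theorem. This writes the integral as a finite linear combination, with $\alpha$-dependent coefficients, of the integrals $I_j(s) := \int_\alpha^\infty v^j(1+v^2)^{-s}\,\d v$, and since $\Res_{s=\frac12}$ is linear it suffices to compute $\Res_{s=\frac12} I_j(s)$ for each $j$.

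Next I would observe that $I_j(s)$ differs from the full integral $J_j(s):=\int_0^\infty v^j(1+v^2)^{-s}\,\d v$ by $\int_0^\alpha v^j(1+v^2)^{-s}\,\d v$, which is an entire function of $s$ (the integrand is continuous on a compact interval and holomorphic in $s$), hence contributes nothing to the residue. So $\Res_{s=\frac12} I_j(s) = \Res_{s=\frac12} J_j(s)$. The integral $J_j(s)$ is evaluated by the standard substitution $w = v^2$ (or $t = (1+v^2)^{-1}$) to a Beta function: $J_j(s) = \tfrac12 B\!\left(\tfrac{j+1}{2},\, s-\tfrac{j+1}{2}\right) = \tfrac{\Gamma\left(\frac{j+1}{2}\right)\Gamma\left(s-\frac{j+1}{2}\right)}{2\,\Gamma(s)}$, valid for $\Real(s) > \tfrac{j+1}{2}$ and meromorphically continued. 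The only pole at $s = \tfrac12$ comes from $\Gamma\!\left(s-\tfrac{j+1}{2}\right)$ when $s - \tfrac{j+1}{2}$ is a non-positive integer, i.e. precisely when $j$ is even (say $j = 2i$), in which case $s-\tfrac{j+1}{2} = \tfrac12 - i - \tfrac12 +\tfrac12\cdot 0\ldots$ — more carefully, at $s=\tfrac12$ the argument is $\tfrac12 - \tfrac{2i+1}{2} = -i$, a non-positive integer, giving residue $\tfrac{(-1)^i}{i!}$ for $\Gamma$. Thus $\Res_{s=\frac12} J_{2i}(s) = \tfrac{\Gamma\left(i+\frac12\right)}{2\,\Gamma\left(\frac12\right)}\cdot\tfrac{(-1)^i}{i!}$, while $\Res_{s=\frac12}J_j(s) = 0$ for $j$ odd.

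Finally I would reassemble: only even $j = 2i$ survive, $\ell - 1 - j$ becomes the power of $-\alpha$, and rewriting $i = j/2$, $i! = \Gamma(\tfrac{j}{2}+1)$, $\Gamma(i+\tfrac12) = \Gamma(\tfrac{j+1}{2})$, and $(-1)^{\ell-1-j}(-1)^{j/2} = (-1)^{\ell - 1 - j/2}$ (using $(-1)^{-j} = (-1)^j$ and $(-1)^{j+j/2}=(-1)^{3j/2}=(-1)^{j/2}$ since $j$ even), one lands on exactly the claimed sum $\sum_{j \text{ even}} \binom{\ell-1}{j}\tfrac{(-1)^{\ell-j/2-1}\alpha^{\ell-j-1}\Gamma\left(\frac{j+1}{2}\right)}{4\,\Gamma\left(\frac{j}{2}+1\right)\Gamma\left(\frac12\right)}$, the factor $4$ coming from the $\tfrac12$ in $J_j$ times the $\tfrac12$ from $\Gamma(i+\tfrac12)/(2\Gamma(\tfrac12))$... here I would just track the constant $\tfrac12\cdot\tfrac12=\tfrac14$ carefully. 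The main obstacle is purely bookkeeping: getting the sign $(-1)^{\ell-\frac{j}{2}-1}$ and the constant $\tfrac14$ to match requires careful handling of the $\Gamma$-function residue and the binomial sign, but there is no analytic difficulty — the interchange of $\Res$ with the finite sum is trivial, and the removal of the $[0,\alpha]$ piece is justified by holomorphy of an integral of a jointly continuous, holomorphic-in-$s$ integrand over a compact set.
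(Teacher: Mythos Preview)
Your proposal is correct and follows essentially the same approach as the paper's proof: substitute $v=u+\alpha$, expand $(v-\alpha)^{\ell-1}$ binomially, discard the holomorphic contribution from $\int_0^\alpha$, evaluate $\int_0^\infty v^j(1+v^2)^{-s}\,\d v$ as a Beta integral, and read off the residue at $s=\tfrac12$ from the pole of $\Gamma\!\left(s-\tfrac{j+1}{2}\right)$ for even $j$. The only point requiring care, as you note, is the bookkeeping of signs and the numerical constant, which is exactly where the paper's own computation concentrates its effort.
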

\begin{proof}
Fix $s \in \C$ with $\textup{Re}(s) > \frac{\ell+1}{2}$. We make the substitution $v = u+\alpha$ and apply the binomial expansion to obtain
\begin{align*}
\int_0^\infty u^{\ell-1} (1+(u+\alpha)^2)^{-s} \, \d u  &= \int_\alpha^\infty (v-\alpha)^{\ell-1} (1+v^2)^{-s} \, \d v \\
&= \sum_{j=0}^{\ell-1} \binom{\ell-1}{j} (-\alpha)^{\ell-j-1} \int_\alpha^\infty v^j(1+v^2)^{-s} \, \d v \\
&=  \sum_{j=0}^{\ell-1} \binom{\ell-1}{j} (-\alpha)^{\ell-j-1} \int_0^\infty v^j(1+v^2)^{-s} \, \d v \\
&\quad -  \sum_{j=0}^{\ell-1} \binom{\ell-1}{j} (-\alpha)^{\ell-j-1} \int_0^\alpha v^j(1+v^2)^{-s} \, \d v.
\end{align*}
Since the integrals from $0$ to $\alpha$ are over a finite region, they are holomorphic at $s = \frac12$ and thus have vanishing residue. So we compute for $0 \leq j \leq \ell-1$ that
\begin{align*}
 \int_0^\infty v^j(1+v^2)^{-s} \, \d v &= \frac12 \int_0^\infty w^{\frac{j+1}{2}-1} (1+w)^{-s} \, \d w 
= \frac{\Gamma\left(\frac{j+1}{2} \right) \Gamma \left(s-\frac{j+1}{2} \right)}{2 \Gamma(s)}.
\end{align*}
Taking the residue at $s = \frac12$ we find 
\begin{align*}
\mathop{\textup{Res}}_{s=\frac12} \left( \int_0^\infty v^j(1+v^2)^{-s} \, \d v \right) &= \begin{cases}
\frac{(-1)^{\frac{j}{2}}{\frac{j+1}{2}}\Gamma\left(\frac{j+1}{2} \right)}{2\Gamma\left(\frac{j}{2}+1 \right) \Gamma\left(\frac12 \right)}, \quad &\textup{if } j \textup{ is even}, \\
0, \quad &\textup{otherwise},
\end{cases}
\end{align*}
from which the result follows.
\end{proof}

To evaluate some further traces, we need to be able to integrate polynomials over $\mathbb{S}^{n-1}$. We use the following result \cite{folland}.
\begin{lemma}\label{lem:folland}
Let $\beta$ be a multi-index of length $n$ and let $P_\beta:\R^n \to \C$ be given by $P_\beta(x) = x^\beta = x_1^{\beta_1} \cdots x_n^{\beta_n}$. Then
\begin{align*}
\int_{\Sf^{n-1}} P_\beta(\omega)\, \d \omega &= \begin{cases} 0, \quad & \textup{ if some } \beta_j \textup{ is odd,} \\
\frac{2 \Gamma\left( \frac{\beta_1+1}{2} \right) \cdots \Gamma\left( \frac{\beta_n+1}{2} \right)}{\Gamma\left( \frac{n+|\beta|}{2} \right)}, \quad & \textup{ if all } \beta_j \textup{ are even.}
\end{cases}
\end{align*}
\end{lemma}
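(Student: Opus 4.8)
The plan is to reduce the spherical integral to a Gaussian integral over $\R^n$ and exploit the factorisation of $\e^{-|x|^2}$ into one-dimensional pieces. The vanishing statement is immediate: if some $\beta_j$ is odd, then $\omega \mapsto P_\beta(\omega) = \omega_1^{\beta_1}\cdots\omega_n^{\beta_n}$ is odd under the reflection $\omega_j \mapsto -\omega_j$, which preserves the surface measure on $\Sf^{n-1}$, so the integral is zero. It therefore remains to treat the case where all $\beta_j$ are even.

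For the main case, first I would compute the Gaussian moment $\int_{\R^n} x^\beta \e^{-|x|^2}\,\d x$ in two ways. On one hand, Fubini gives
\begin{align*}
\int_{\R^n} x^\beta \e^{-|x|^2}\,\d x &= \prod_{j=1}^n \int_{\R} t^{\beta_j}\e^{-t^2}\,\d t = \prod_{j=1}^n \Gamma\!\left(\tfrac{\beta_j+1}{2}\right),
\end{align*}
using the standard evaluation $\int_\R t^{2k}\e^{-t^2}\,\d t = \Gamma(k+\tfrac12)$ (and that the odd moments vanish, consistent with the case split). On the other hand, passing to polar coordinates $x = r\omega$ with $r = |x| \geq 0$ and $\omega \in \Sf^{n-1}$, and using the homogeneity $P_\beta(r\omega) = r^{|\beta|}P_\beta(\omega)$ together with $\d x = r^{n-1}\,\d r\,\d\omega$,
\begin{align*}
\int_{\R^n} x^\beta \e^{-|x|^2}\,\d x &= \left(\int_0^\infty r^{|\beta|+n-1}\e^{-r^2}\,\d r\right)\left(\int_{\Sf^{n-1}} P_\beta(\omega)\,\d\omega\right) = \tfrac12\,\Gamma\!\left(\tfrac{n+|\beta|}{2}\right)\int_{\Sf^{n-1}} P_\beta(\omega)\,\d\omega,
\end{align*}
where the radial integral is evaluated by the substitution $w = r^2$. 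Equating the two expressions and solving for the spherical integral yields the claimed formula.

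There is essentially no obstacle here: the argument is the textbook computation (this is why the statement is simply attributed to \cite{folland}), and the only points requiring a word of care are the justification of Fubini and of the polar-coordinate change of variables, both of which are valid by absolute integrability of $x^\beta\e^{-|x|^2}$. The one bookkeeping subtlety is matching the Gamma-function normalisations — in particular writing $\int_\R t^{2k}\e^{-t^2}\,\d t = \Gamma\!\left(\tfrac{2k+1}{2}\right)$ rather than in terms of double factorials — so that the final quotient comes out exactly as stated.
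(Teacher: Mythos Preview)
Your argument is correct and is exactly the standard proof from \cite{folland}; note that the paper itself does not supply a proof of this lemma but simply cites Folland, so there is nothing further to compare against.
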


We now contribute the residue of the contribution from the ``integral of one form'' term to the spectral flow.

\begin{prop}\label{prop:one-form-res}
Suppose that $V \in C_c^\infty(\R^n)$ and $\alpha > -2\nu$. Then for $n$ odd we have
\begin{align*}
\mathop{\textup{Res}}_{s=\frac12} \left( \int_0^1 \textup{Tr}\left(V\left( \textup{Id}+(H_0+tV+\alpha)^2 \right)^{-s} \right) \, \d t \right) &=0.
\end{align*}
If $n$ is even we have
\begin{align*}
&\mathop{\textup{Res}}_{s=\frac12} \left( \int_0^1 \textup{Tr}\left(V\left( \textup{Id}+(H_0+tV+\alpha)^2 \right)^{-s} \right) \, \d t \right) \\
&=\sum_{\ell = 1}^{\frac{n}{2}} \mathop{\sum_{j=0}^{\frac{n}{2}-\ell}}_{j \textup{ even}} \binom{\frac{n}{2}-\ell}{j} \frac{(-1)^{\frac{n}{2}-\ell-\frac{j}{2}-1} \alpha^{\frac{n}{2}-\ell-j}\Gamma\left(\frac{j+1}{2}\right)}{4(2\pi i)\Gamma\left(\frac{j}{2}+1\right)\Gamma\left(\frac12 \right)} C_\ell(n,V),
\end{align*}
with the $C_\ell(n,V)$ the high-energy coefficients defined in Equation \eqref{eq:high-energy-coefficients}.
\end{prop}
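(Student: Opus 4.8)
The plan is to insert the pseudodifferential expansion of Lemma \ref{lem:psido} into the resolvent power $(\textup{Id}+(H_0+tV+\alpha)^2)^{-s}$, take the trace, and then extract the residue at $s=\tfrac12$ term by term. The first observation is that $\textup{Id}+(H_0+tV+\alpha)^2$ is, up to lower order, a function of $H_0$, so one wants to rewrite $(\textup{Id}+(H_t(\alpha))^2)^{-s}$ in terms of $(H_0+z)^{-1}$-type factors. Concretely, I would write $\textup{Id}+(H_t(\alpha))^2 = (H_t(\alpha)-i+\alpha)(H_t(\alpha)+i+\alpha)$ wait --- more carefully, $\textup{Id}+(H_0+tV+\alpha)^2=(H_0+tV+\alpha-i)(H_0+tV+\alpha+i)$, so $(\textup{Id}+(H_t(\alpha))^2)^{-s}$ can be handled by a contour/complex-power representation $A^{-s}=\frac{1}{2\pi i}\int_\Gamma z^{-s}(A-z)^{-1}\,\d z$ with $A=\textup{Id}+(H_t(\alpha))^2$, reducing everything to resolvents $R(z)=(H_t(\alpha)-z)^{-1}$ for $z$ off the spectrum, which is exactly the input for Lemma \ref{lem:psido} (applied with $H$ replaced by $H_t(\alpha)$, whose potential is $tV$).

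Next, feed the expansion of Lemma \ref{lem:psido} into the trace. Each summand is, schematically, a product $C_m(k)(-1)^{m+|k|}(tV)^{(k_1)}\cdots(tV)^{(k_m)}R_0(z)^{m+|k|+1}$ (plus remainder $P_{m,K}(z)$ of order $\le -2m-K-3$ and a high-order tail $(-1)^{M+1}(R_0(z)(tV))^{M+1}R(z)$). Using \eqref{eq:psido-notation}, rewrite $(tV)^{(k_1)}\cdots(tV)^{(k_m)}=t^m\sum_{|\beta|\le|k|}g_{k,\beta}\partial^\beta$, so each term contributes a trace of the form $t^m\int_0^1\,\d t\cdot C_m(k)(-1)^{m+|k|}(-1)^{m}\Tr\big(g_{k,\beta}\partial^\beta f_s(\textup{stuff involving }H_0+\alpha)\big)$ --- hmm, the cleanest route is: recombine the complex power, i.e. the contour integral over $z$ turns $\sum_m(\cdots)R_0(z)^{m+|k|+1}$ back into derivatives of $(\textup{Id}+(H_0+\alpha)^2)^{-s}$ in disguise, giving a finite sum of terms of the shape (constant)$\cdot\Tr\big(g_{k,\beta}\,\partial^\beta\,h_{s}(H_0+\alpha)\big)$ where $h_s$ is an explicit power-type function. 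Compute such a trace by going to the Fourier side: $\partial^\beta$ becomes $(i\xi)^\beta$, $h_s(H_0+\alpha)$ becomes $h_s(|\xi|^2+\alpha)$, and $\Tr(g_{k,\beta}\,m_{h_s})=(2\pi)^{-n}\big(\int_{\R^n}g_{k,\beta}(x)\,\d x\big)\int_{\R^n}(i\xi)^\beta h_s(|\xi|^2+\alpha)\,\d\xi$. The $\xi$-integral splits into a radial part (set $u=|\xi|^2$) and an angular part $\int_{\Sf^{n-1}}\omega^\beta\,\d\omega$, which is evaluated by Lemma \ref{lem:folland} --- killing all odd $\beta$ and producing the Gamma-factor $\Gamma(\tfrac{\beta_1+1}{2})\cdots\Gamma(\tfrac{\beta_n+1}{2})$ that appears in \eqref{eq:high-energy-coefficients}. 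The radial integral is $\int_0^\infty u^{\ell-1}(1+(u+\alpha)^2)^{-s}\,\d u$ (after matching $\ell$ to the homogeneity $m+|k|+1-\tfrac{|\beta|}{2}$, exactly the constraint defining $Q_{M,K}(\ell)$), whose residue at $s=\tfrac12$ is given by Lemma \ref{lem:even-one-form-residue}.

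Finally, assemble. For $M,K$ chosen with $M+K\ge n$, the remainder $P_{m,K}(z)$ and the tail $(R_0(z)(tV))^{M+1}R(z)$ yield traces of operators whose order is low enough that, after the $z$-contour integral, the resulting function of $s$ is holomorphic at $s=\tfrac12$ and contributes no residue --- this is the standard "enough terms kill the tail" argument and I would cite the order bookkeeping in Lemma \ref{lem:psido}. Collecting the surviving terms, the combinatorial prefactor $C_m(k)$, the signs $(-1)^{m+|k|+1}$, the factor $(-i)^{|\beta|}$ from $(i\xi)^\beta$ restricted to even $\beta$, the $\Gamma$-quotient from Folland, and the $t$-integral $\int_0^1 t^m\,\d t=\tfrac{1}{m+1}$ reproduce precisely the definition \eqref{eq:high-energy-coefficients} of $C_\ell(n,V)$ (up to the overall $(2\pi i)^{-1}$ normalisation visible in the statement), while the radial residue supplies the $\sum_{j\text{ even}}\binom{n/2-\ell}{j}$ factor from Lemma \ref{lem:even-one-form-residue}. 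In odd dimension, the homogeneity constraint $m+|k|+1-\tfrac{|\beta|}{2}=\ell$ together with $|\beta|\le|k|$ forces, after the angular integral annihilates odd $\beta$, the relevant radial exponent $\ell$ to be such that $\int_0^\infty v^j(1+v^2)^{-s}\,\d v$ has vanishing residue for every $j$ that occurs (all available $j$ are odd, by the parity mismatch between $n$ odd and the even-$\beta$ constraint), giving $0$; I would spell out this parity count as it is the crux of the odd-$n$ claim. The main obstacle is bookkeeping: correctly tracking how the contour-integral recombination of the resolvent powers interacts with the complex power $s$ so that the homogeneity index $\ell$ in Lemma \ref{lem:even-one-form-residue} lines up with $m+|k|+1-\tfrac{|\beta|}{2}$, and verifying that all signs and $(2\pi)$-powers match \eqref{eq:high-energy-coefficients} exactly rather than up to an unexamined constant.
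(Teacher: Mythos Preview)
Your approach is essentially the paper's: pseudodifferential expansion of the resolvent, Fourier-side trace, Folland for the angular integral, Lemma~\ref{lem:even-one-form-residue} for the radial residue, and order-counting to kill the remainder. Two points to sharpen.

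First, the paper avoids the detour through $(A-z)^{-1}$ for $A=\textup{Id}+(H_t(\alpha))^2$: it sets $\varphi_{\alpha,s}(x)=(1+(x+\alpha)^2)^{-s}$, holomorphic in a right half-plane containing $\sigma(H_0+tV)$, and writes $\varphi_{\alpha,s}(H_0+tV)=-\frac{1}{2\pi i}\int_\gamma\varphi_{\alpha,s}(z)(H_0+tV-z)^{-1}\,\d z$ directly over a vertical line $\gamma$. Lemma~\ref{lem:psido} then applies immediately to the first-order resolvent, and Cauchy turns $\int_\gamma\varphi_{\alpha,s}(z)R_0(z)^{m+|k|+1}\,\d z$ into $\frac{-1}{(m+|k|)!}\varphi_{\alpha,s}^{(m+|k|)}(H_0)$. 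This is cleaner than reaching first-order resolvents via a partial-fraction decomposition of $(A-z)^{-1}$.

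Second, your odd-$n$ argument is not quite as you describe it. The radial exponent that actually appears after the trace and integration by parts is $\frac{n+|\beta|}{2}-m-|k|-1=\frac{n}{2}-\ell$ (with your $\ell=m+|k|+1-\frac{|\beta|}{2}$ indexing $C_\ell(n,V)$), not $\ell-1$. For $n$ odd and $|\beta|$ even this exponent is a half-integer, so Lemma~\ref{lem:even-one-form-residue} does not apply at all; the paper instead bounds $\int_0^\infty u^{\frac{n+|\beta|}{2}-m-|k|-1}(1+(u+\alpha)^2)^{-s}\,\d u$ by a Beta integral and observes that the resulting $\Gamma\big(s+\frac{m+|k|}{2}-\frac{n+|\beta|}{4}\big)$ has no pole at $s=\frac12$ because $\frac12+\frac{m+|k|}{2}-\frac{n+|\beta|}{4}$ is an odd integer over $4$, never an integer. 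Finally, the range $m+|k|>\frac{n+|\beta|}{2}-1$ (where the integration by parts terminates early and leaves a boundary evaluation of a derivative of $\varphi_{\alpha,s}$) needs a separate, though straightforward, check that the contribution is holomorphic at $s=\frac12$.
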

\begin{proof}
For $\alpha > 0$ and $\textup{Re}(s) > \frac{n}{4}$ we define the function $\varphi_{\alpha,s}: \R \to \C$ by
\begin{align*}
\varphi_{\alpha,s}(x) &= \left(1+(x+\alpha)^2 \right)^{-s},
\end{align*}
using the principal branch of the logarithm. The function $\varphi_{\alpha,s}$ is holomorphic in the half-plane $\textup{Re}(z) > - \alpha$.
Let $a \in \left(-\frac{\alpha}{2},0 \right)$ so that $a < \lambda$ for all $t \in [0,1]$ and $\lambda \in \sigma(H_0+tV)$ and define the vertical line $\gamma = \{ a+iv : v \in \R\}$.
For $t \in [0,1]$ we use Cauchy's integral formula to write
\begin{align}\label{eq:cauchy-int}
\varphi_{\alpha,s}(H_0+tV) &= - \frac{1}{2\pi i} \int_\gamma \varphi_{\alpha,s}(z) (H_0+tV -z)^{-1} \, \d z.
\end{align}
Denoting $R_t(z) = (H_0+tV-z)^{-1}$ we have by Lemma \ref{lem:psido} that for all $M, K \geq 0$ that 
\begin{align*}
R_t(z) &= \sum_{m=0}^M \left( t^m \sum_{|k| = 0}^K (-1)^{m+|k|}C_m(k) V^{(k_1)} \cdots V^{(k_m)} R_0(z)^{m+|k|+1} + P_{m,K,t}(z) \right) \\
&\quad + (-1)^{M+1} t^{M+1} (R_0(z)V)^{M+1} R_t(z),
\end{align*}
where $P_{m,K,t}(z)$ has order (at most) $-2m-K-3$. We can now write Equation \eqref{eq:cauchy-int} as 
\begin{align*}
\varphi_{\alpha,s}(H_0+tV) &= - \frac{1}{2\pi i} \sum_{m=0}^M \Bigg( t^m \sum_{|k|=0}^K (-1)^{m+|k|} C_m(k) V^{(k_1)} \cdots V^{(k_m)} \int_\gamma \varphi_{\alpha,s}(z) R_0(z)^{m+|k|+1}\, \d z \\
&\quad +\int_\gamma \varphi_{\alpha,s}(z) P_{m,K,t}(z) \, \d z \Bigg)  + \frac{(-1)^M t^{M+1}}{2\pi i} \int_\gamma \varphi_{\alpha,s}(z) (R_0(z) V)^{M+1} R_t(z) \, \d z \\
&:= - \frac{1}{2\pi i} \sum_{m=0}^M \ t^m \sum_{|k|=0}^K (-1)^{m+|k|} V^{(k_1)} \cdots V^{(k_m)} \int_\gamma \varphi_{\alpha,s}(z) R_0(z)^{m+|k|+1}\, \d z \\
&\quad + E(M,K,t,\alpha,s).
\end{align*}
Using again Cauchy's integral formula we can compute that 
\begin{align*}
\frac{1}{2\pi i} \int_\gamma \varphi_{\alpha,s}(z) R_0(z)^{m+|k|+1} \, \d z &= - \frac{1}{(m+|k|)!} \frac{\d^{m+|k|}\varphi_{\alpha,s}}{\d z^{m+|k|}} \vert_{z = H_0}.
\end{align*}
Thus we have the expression
\begin{align*}
\varphi_{\alpha,s}(H_0+tV) &= \sum_{m=0}^M \sum_{|k|=0}^K \frac{C_m(k) (-1)^{m+|k|} t^m}{(m+|k|)!} V^{(k_1)} \cdots V^{(k_m)} \frac{\d^{m+|k|}\varphi_{\alpha,s}}{\d z^{m+|k|}}  \vert_{z = H_0}  + E(M,K,t,\alpha,s).
\end{align*}
Choose $M = \lfloor n\rfloor$ and for $0 \leq m \leq M$ let $K = M-m$. Since $V^{(k_1)} \cdots V^{(k_m)}$ is a differential operator of order $|k|$, we can write 
\begin{align*}
V^{(k_1)} \cdots V^{(k_m)} = \sum_{|\beta| = 0}^{|k|} g_{k,\beta} \partial^{\beta},
\end{align*}
where $\beta$ is a multi-index of length $n$ and $g_{k,\beta} \in C_c^\infty(\R^n)$. Then we can use Lemma \ref{lem:folland} to compute
\begin{align*}
&\textup{Tr} \left( V V^{(k_1)} \cdots V^{(k_m)} \frac{\d^{m+|k|}\varphi_{\alpha,s}}{\d z^{m+|k|}}  \vert_{z = H_0}  \right) = \sum_{|\beta| = 0}^{|k|} \textup{Tr} \left( V g_{k,\beta} \partial^{\beta} \frac{\d^{m+|k|}\varphi_{\alpha,s}}{\d z^{m+|k|}}  \vert_{z = H_0}  \right) \\
&= (2\pi)^{-n} \sum_{|\beta|=0}^{|k|} \left( \int_{\R^n} V(x) g_{k,\beta}(x) \, \d x \right) \left( \int_{\R^n} (-i)^{|\beta|} \xi^\beta \frac{\d^{m+|k|}\varphi_{\alpha,s}}{\d z^{m+|k|}} (|\xi|^2) \, \d \xi  \right) \\
&=  \mathop{\sum_{|\beta|=0}^{|k|}}_{\beta \textup{ even}}\! \frac{2(-i)^{|\beta|} \Gamma\left( \frac{\beta_1+1}{2} \right) \cdots \Gamma \left( \frac{\beta_n+1}{2} \right)}{(2\pi)^n\Gamma \left( \frac{n+|\beta|}{2} \right)} \left( \int_{\R^n} \! V(x) g_{k,\beta}(x) \! \d x \right) \left( \int_0^\infty \! r^{n+|\beta|-1} \frac{\d^{m+|k|} \varphi_{\alpha,s}}{\d z^{m+|k|}} (r^2) \, \d r \right) \\
&= \mathop{\sum_{|\beta|=0}^{|k|}}_{\beta \textup{ even}}\! \frac{(-i)^{|\beta|} \Gamma\left( \frac{\beta_1+1}{2} \right) \cdots \Gamma \left( \frac{\beta_n+1}{2} \right)}{(2\pi)^n \Gamma \left( \frac{n+|\beta|}{2} \right)} \left( \int_{\R^n} \! V(x) g_{k,\beta}(x) \, \d x \right) \left( \int_0^\infty \!u^{\frac{n+|\beta|}{2}-1} \frac{\d^{m+|k|} \varphi_{\alpha,s}}{\d u^{m+|k|}} (u) \, \d u \right)\!,
\end{align*}
where the sum is over multi-indices $\beta$ with all $\beta_j$ even.
First, suppose that $m+|k| \leq \frac{n+|\beta|}{2}-1$. Integrating by parts in the $u$ integral $(m+|k|)$ times we find
\begin{align*}
&\textup{Tr} \left( V V^{(k_1)} \cdots V^{(k_m)} \frac{\d^{m+|k|}\varphi_{\alpha,s}}{\d z^{m+|k|}}  \vert_{z = H_0}  \right) \\
&=  \mathop{\sum_{|\beta|=0}^{|k|}}_{\beta \textup{ even}} \frac{(-i)^{|\beta|} (-1)^{m+|k|} \Gamma\left( \frac{\beta_1+1}{2} \right) \cdots \Gamma \left( \frac{\beta_n+1}{2} \right)}{(2\pi)^n \Gamma \left( \frac{n+|\beta|}{2} \right)} \left( \int_{\R^n} V(x) g_{k,\beta}(x) \, \d x \right) \\
&\times \left( \int_0^\infty \frac{\d^{m+|k|}}{\d u^{m+|k|}} \left( u^{\frac{n+|\beta|}{2}-1} \right)  \varphi_{\alpha,s}(u) \, \d u \right) \\
&=  \mathop{\sum_{|\beta|=0}^{|k|}}_{\beta \textup{ even}} \frac{(-i)^{|\beta|}(-1)^{m+|k|} \Gamma\left( \frac{\beta_1+1}{2} \right) \cdots \Gamma \left( \frac{\beta_n+1}{2} \right) }{(2\pi)^n \Gamma\left( \frac{n+|\beta|}{2}-m-|k| \right)} \left( \int_{\R^n} V(x) g_{k,\beta}(x) \, \d x \right) \\
&\times \left( \int_0^\infty  u^{\frac{n+|\beta|}{2}-m-|k|-1} \varphi_{\alpha,s}(u) \, \d u \right).
\end{align*}
Note that the boundary terms from the integration by parts vanish since $\frac{n+|\beta|}{2}-m-|k|-1 > 0$. We first consider $n$ odd and make the estimate
\begin{align*}
\left|\int_0^\infty  u^{\frac{n+|\beta|}{2}-m-|k|-1} \varphi_{\alpha,s}(u) \, \d u \right| &\leq \int_0^\infty u^{\frac{n+|\beta|}{2}-m-|k|-1} (1+u^2)^{-\textup{Re}(s)} \, \d u \\
&= \frac12 \int_0^\infty v^{\frac{n+|\beta|}{4}-\frac{m+|k|}{2}-1} (1+v)^{-\textup{Re}(s)} \, \d v \\
&= \frac{\Gamma\left(\frac{n+|\beta|}{4}-\frac{m+|k|}{2} \right) \Gamma\left( \textup{Re}(s)+\frac{m+|k|}{2}-\frac{n+|\beta|}{4} \right)}{2 \Gamma \left( \textup{Re}(s) \right)}.
\end{align*}
Since $n$ is odd, we have
\begin{align*}
\frac12 + \frac{m+|k|}{2} - \frac{n+|\beta|}{4} &\notin \Z
\end{align*}
for all possible $m, k, \beta$ and thus we find
\begin{align*}
\mathop{\textup{Res}}_{s=\frac12} \left(\int_0^\infty  u^{\frac{n+|\beta|}{2}-m-|k|-1} \varphi_{\alpha,s}(u) \, \d u \right) &= 0,
\end{align*}
from which we deduce that
\begin{align*}
\mathop{\textup{Res}}_{s=\frac12} \left( \int_0^1 \textup{Tr}\left(V\left( \textup{Id}+(H_0+tV+\alpha)^2 \right)^{-s} \right) \, \d t \right) &=\mathop{\textup{Res}}_{s=\frac12} \left( \int_0^1 \textup{Tr} \left( VE(M,K,t,\alpha,s \right) \, \d t \right),
\end{align*}
which we show is zero below.
We now consider $n$ even. An application of Lemma \ref{lem:even-one-form-residue} gives
\begin{align*}
&\mathop{\textup{Res}}_{s=\frac12} \left(\int_0^\infty ( u^{\frac{n+|\beta|}{2}-m-|k|-1} \varphi_{\alpha,s}(u) \, \d u \right)  \\
&= \sum_{j=0}^{\frac{n+|\beta|}{2}-m-|k|-1} \binom{\frac{n+|\beta|}{2}-m-|k|-1}{j} \frac{(-1)^{\frac{n+|\beta|}{2}-m-|k|-1} \alpha^{\frac{n+|\beta|}{2}-m-|k|-j-1}\Gamma\left(\frac{j+1}{2} \right)}{4 \Gamma\left( \frac{j}{2}+1 \right) \Gamma\left(\frac12 \right)}.
\end{align*}
Next, we consider the case $m+|k|>\frac{n+|\beta|}{2}-1$. If $n$ is even, we integrate by parts $\frac{n+|\beta|}{2}-1$ times in the $u$-integral to obtain
\begin{align*}
\int_0^\infty u^{\frac{n}{2}-1} \frac{\d^{m+|k|} \varphi_{\alpha,s}}{\d u^{m+|k|}} (u) &= (-1)^{\frac{n+|\beta|}{2}-1} \int_0^\infty \left(\frac{\d^{\frac{n+|\beta|}{2}-1}}{\d u^{\frac{n+|\beta|}{2}-1}} u^{\frac{n+|\beta|}{2}-1} \right) \left( \frac{\d^{m+|k|+1-\frac{n+|\beta|}{2}} \varphi_{\alpha,s}}{\d u^{m+|k|+1-\frac{n+|\beta|}{2}}}(u) \right) \, \d u \\
&= (-1)^{\frac{n+|\beta|}{2}} \Gamma\left(\frac{n+|\beta|}{2} \right) \int_0^\infty \frac{\d^{m+|k|+1-\frac{n+|\beta|}{2}} \varphi_{\alpha,s}}{\d u^{m+|k|+1-\frac{n+|\beta|}{2}}}(u) \, \d u \\
&= (-1)^{\frac{n+|\beta|}{2}+1} \Gamma\left(\frac{n+|\beta|}{2} \right) \frac{\d^{m+|k|-\frac{n+|\beta|}{2}} \varphi_{\alpha,s}}{\d u^{m+|k|+1-\frac{n+|\beta|}{2}}}\Big\vert_{u=0},
\end{align*}
which is holomorphic at $s = \frac12$. If $n$ is odd, then a similar estimate to the case $m+|k| \leq \frac{n+|\beta|}{2}-1$ shows that the contribution is holomorphic at $s = \frac12$.
So we find
\begin{align*}
&\mathop{\textup{Res}}_{s=\frac12} \left(\int_0^1 t^m \textup{Tr} \left( V V^{(k_1)} \cdots V^{(k_m)} \frac{\d^{m+|k|}\varphi_{\alpha,s}}{\d z^{m+|k|}}  \vert_{z = H_0}  \right) \, \d t \right) \\
&=  \mathop{\sum_{|\beta|=0}^{|k|}}_{\beta \textup{ even}} \mathop{\sum_{j=0}^{\frac{n+|\beta|}{2}-m-|k|-1}}_{j \textup{ even}}  \left( \int_{\R^n} V(x) g_{k,\beta}(x) \, \d x \right) \binom{\frac{n+|\beta|}{2}-m-|k|-1}{j} \\
&\times \frac{(-i)^{|\beta|}(-1)^{\frac{n+|\beta|}{2}-m-|k|-\frac{j+1}{2}} \alpha^{\frac{n+|\beta|}{2}-m-|k|-j-1}  \Gamma\left( \frac{\beta_1+1}{2} \right) \cdots \Gamma \left( \frac{\beta_n+1}{2} \right) \Gamma\left(\frac{j+1}{2} \right)}{(2\pi)^n(m+1)\Gamma\left(\frac{j}{2}+1 \right)\Gamma\left(\frac12 \right) \Gamma\left( \frac{n+|\beta|}{2}-m-|k| \right)}.
\end{align*}
For $\ell \in \N$ define the set
\begin{align*}
Q_{M,K}(\ell) &= \{(m,k,\beta) \in \{ 0,1, \dots, M\} \times \{0,1,\dots,K\}^m \times \{0,1, \dots, K\}^n : |\beta| \leq |k|, \\
&\quad m+|k|+1- \frac{|\beta|}{2} = \ell \}.
\end{align*}
Recalling the coefficients $C_\ell(n,V)$ of the high-energy polynomial for $\xi$ of Equation \eqref{eq:high-energy-coefficients} we have that 
\begin{align*}
&\mathop{\textup{Res}}_{s=\frac12} \left( \sum_{m=0}^M \int_0^1 t^m \sum_{|k|=0}^K  \frac{C_m(k) (-1)^{m+|k|}}{(m+|k|)!}\textup{Tr} \left( V V^{(k_1)} \cdots V^{(k_m)} \frac{\d^{m+|k|}\varphi_{\alpha,s}}{\d z^{m+|k|}}  \vert_{z = H_0}  \right) \, \d t \right) \\
&=  \sum_{m=0}^M \sum_{|k|=0}^K \mathop{\sum_{|\beta|=0}^{|k|}}_{\beta \textup{ even}} \mathop{\sum_{j=0}^{\frac{n+|\beta|}{2}-m-|k|-1}}_{j \textup{ even}}  \left( \int_{\R^n} V(x) g_{k,\beta}(x) \, \d x \right) \binom{\frac{n+|\beta|}{2}-m-|k|-1}{j} \\
&\times \frac{(-i)^{|\beta|}(-1)^{\frac{n+|\beta|}{2}-m-|k|-\frac{j+1}{2}} \alpha^{\frac{n+|\beta|}{2}-m-|k|-j-1}  \Gamma\left( \frac{\beta_1+1}{2} \right) \cdots \Gamma \left( \frac{\beta_n+1}{2} \right) \Gamma\left(\frac{j+1}{2} \right)}{(2\pi)^n(m+1)\Gamma\left(\frac{j}{2}+1 \right)\Gamma\left(\frac12 \right) \Gamma\left( \frac{n+|\beta|}{2}-m-|k| \right)}\\
&= \sum_{\ell = 1}^{\frac{n}{2}} \mathop{\sum_{j=0}^{\frac{n}{2}-\ell}}_{j \textup{ even}} \binom{\frac{n}{2}-\ell}{j} \frac{(-1)^{\frac{n}{2}-\ell-\frac{j}{2}-1} \alpha^{\frac{n}{2}-\ell-j-1}\Gamma\left(\frac{j+1}{2} \right)}{4(2\pi i)\Gamma\left(\frac{j}{2}+1 \right) \Gamma\left(\frac12 \right)} C_\ell(n,V).
\end{align*}
We now consider the contribution from the remainder term $E(M,K,t,\alpha,s)$. There are two types of terms to consider. The first are those involving $P_{m,K,t}(z)$. Since $V \in C_c^\infty(\R^n)$ we can factorise $V = q_1 q_2$ with $q_1, q_2 \in C_c^\infty(\R^n)$. Since $P_{m,K,t}(z)$ has order at most $-2m-K-3$, there exists $C > 0$ such that
\begin{align*}
\norm{R_0(z)^{-m-\frac{K}{2}-\frac{3}{2}}q_2 P_{m,K,t}(z) } &\leq C.
\end{align*}
Note also that
\begin{align*}
\norm{q_1 R_0(z)^{m+\frac{K}{2}+\frac{3}{2}}}_1 &\leq (a^2+v^2)^{-\frac{m}{2}-\frac{K}{4}-\frac{3}{4}+\frac{n+1}{4}},
\end{align*}
which follows from a careful application of the Rellich lemma. Combining these we make the estimate
\begin{align*}
\norm{\int_\gamma \varphi_{\alpha,s}(z) P_{m,K,t}(z) \, \d z }_1 &\leq C \int_\R (1+((a+\alpha)^2+v^2))^{-\frac{\textup{Re}(s)}{2}} (a^2+v^2)^{-\frac{m}{2}-\frac{K}{4}-\frac{3}{4}+\frac{n+1}{4}} \, \d v,
\end{align*}
which is finite for $\textup{Re}(s)+m+\frac{K}{2}+\frac{3}{2} > \frac{n+1}{2}+1$. Recalling that $\textup{Re}(s) > \frac12$ and we have chosen $M =n$ and $K = M-m$ guarantees convergence. A similar argument to \cite[Lemma 7.4]{CPRS} shows that this contribution is holomorphic at $s = \frac12$, as is the contribution from the terms containing $R_t(z)$.
Thus for both $n$ even and odd, we find 
\begin{align*}
\mathop{\textup{Res}}_{s=\frac12} \left( \int_0^1 \textup{Tr} \left( VE(M,K,t,\alpha,s \right) \, \d t \right) &= 0,
\end{align*}
which completes the proof.
\end{proof}

\subsection{The Birman-Kre\u{\i}n term}\label{sec:eta-term}

In this subsection we use the Birman-Kre\u{\i}n trace formula to determine the kernel and $\eta$ contributions to the spectral flow.
\begin{lemma}
By construction, the projections $P_{\textup{Ker}(H(\alpha))}$ and $P_{\textup{Ker}(H_0(\alpha))}$ are both zero.
\end{lemma}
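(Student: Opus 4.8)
The plan is simply to unwind the choice of shift constant $\alpha$ fixed at the end of Section \ref{sec:stat-scat}. Recall that $\nu \le 0$ was defined to be the furthest eigenvalue of $H$ from zero, and that we fixed $\alpha > -2\nu+1$; since $\nu \le 0$ this in particular forces $\alpha > 1 > 0$.

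First I would treat $H(\alpha) = H+\alpha$. Its eigenvalues are exactly $\lambda_j(\alpha) = \lambda_j+\alpha$, where $\lambda_1 < \cdots < \lambda_K \le 0$ are the (finitely many, by \cite[Theorem 6.1.1]{yafaev10}) eigenvalues of $H$. Since $\lambda_j \ge \nu$ for each $j$, we get $\lambda_j(\alpha) \ge \nu+\alpha > \nu+(-2\nu+1) = -\nu+1 \ge 1 > 0$, so $0$ is not an eigenvalue of $H(\alpha)$; combined with the fact that $\sigma_{ac}(H(\alpha)) = [\alpha,\infty)$ and $\alpha>0$, this gives $0 \notin \sigma(H(\alpha))$, hence $\Ker(H(\alpha)) = \{0\}$ and $P_{\textup{Ker}(H(\alpha))} = 0$. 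For the free endpoint, $H_0$ has purely absolutely continuous spectrum $[0,\infty)$ and in particular trivial kernel, so $\sigma(H_0(\alpha)) = [\alpha,\infty)$, and since $\alpha>0$ we have $0$ in the resolvent set of $H_0(\alpha)$; thus $\Ker(H_0(\alpha)) = \{0\}$ and $P_{\textup{Ker}(H_0(\alpha))} = 0$.

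There is no genuine obstacle here: the statement is pure bookkeeping, built into the positivity of $\alpha$. The only point needing a previously cited fact is that $\nu$ (equivalently all the $\lambda_j$) is finite, so that a finite $\alpha$ with the stated property exists at all — this is exactly the finiteness of the negative point spectrum of $H$ recalled in Section \ref{sec:stat-scat}. The payoff is that the kernel-projection terms in the residue formula of Lemma \ref{lem:spec-flow-residue} simply drop out when it is applied to the path $H_t(\alpha)$.
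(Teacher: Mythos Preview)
Your proposal is correct and simply makes explicit what the paper leaves as a one-line ``by construction'' remark: the paper has already recorded in Section~\ref{sec:stat-scat} that $\sigma(H_0(\alpha))=[\alpha,\infty)$ and that the eigenvalues of $H(\alpha)$ satisfy $0<\lambda_1(\alpha)<\cdots<\lambda_K(\alpha)\le\alpha$, so neither operator has $0$ in its spectrum. Your argument is exactly this bookkeeping spelled out, with the inequality $\lambda_j(\alpha)\ge\nu+\alpha>-\nu+1\ge1$ supplying the positivity that the paper states directly.
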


Since the kernel terms both vanish we are now able to evaluate the $\eta$ contributions. We note that by Proposition \ref{prop:one-form-res} the residue of the integral of one form contribution to Equation \eqref{eq:spec-flow-residue} at $s = \frac12$ exists, and thus so does the residue of the Birman-Kre\u{\i}n contribution at $s=\frac12$.
\begin{lemma}\label{lem:kernel-vanish}
Suppose that $V \in C_c^\infty(\R^n)$. Then the $\eta$ contribution to the Hamiltonian spectral flow is given by
\begin{align*}
&\Res{s=\frac12} \Big( C_s \big(\textup{Tr}( \eta_s(H(\alpha))-\eta_s(H_0(\alpha)) \big) \Big) \\
&= N+N_{res} + \Res{s=\frac12} \left(\frac{1}{2\pi i} \int_\alpha^\infty C_s \eta_s(\lambda) \textup{Tr}(S_\alpha(\lambda)^*S_\alpha'(\lambda)) \, \d \lambda \right),
\end{align*}
where $N$ is the number of eigenvalues of $H =H_0+V$, counted with multiplicity,
and $N_{res}$ is the contribution from resonances as defined in Theorem \ref{thm:spec-shift-values}.
\end{lemma}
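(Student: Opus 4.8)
The plan is to evaluate the left-hand side by applying the Birman-Kre\u{\i}n trace formula to the function $\eta_s$, multiplying the resulting identity by $C_s$, and taking the residue at $s=\tfrac12$. The only preliminary point is that $\eta_s$ is neither compactly supported nor continuous at $0$, so the trace formula as stated does not literally apply; but neither defect matters. Since $\alpha>-2\nu+1$ we have $\sigma(H(\alpha))\cup\sigma(H_0(\alpha))\subseteq[\nu+\alpha,\infty)$ with $\nu+\alpha>-\nu+1\geq 1$, so both spectra lie in a fixed half-line $[c,\infty)$ with $c>0$ on which $\eta_s$ is smooth. I would first replace $\eta_s$ by $\chi\eta_s$ for a cutoff $\chi\in C^\infty(\R)$ equal to $1$ on $[c,\infty)$ and to $0$ near $0$: this changes nothing at the operator level, since $\chi\eta_s$ agrees with $\eta_s$ on both spectra, but $\chi\eta_s$ is globally smooth. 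From $\eta_s(x)=\tfrac{2}{C_s}\int_x^\infty(1+v^2)^{-s}\,\d v$ for $x>0$ one reads off $\eta_s(\lambda)=O(\lambda^{1-2\textup{Re}(s)})$ and $\eta_s^{(j)}(\lambda)=O(\lambda^{-2\textup{Re}(s)})$ as $\lambda\to\infty$, so for $\textup{Re}(s)$ large $\chi\eta_s$ satisfies the decay hypothesis \eqref{eq:ssf-condition}; hence \eqref{eq:ssf-defining} applies to $\chi\eta_s$, and the integration by parts that converts \eqref{eq:ssf-defining} into the Birman-Kre\u{\i}n trace formula goes through verbatim, its boundary term at $+\infty$ vanishing because $\xi_\alpha$ grows at most polynomially (Lemma \ref{lem:consts-beta}) while $\chi\eta_s$ decays. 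Approximating $\chi\eta_s$ by $C_c^\infty$ functions then gives, for $\textup{Re}(s)$ large,
\begin{align*}
C_s\,\textup{Tr}\big(\eta_s(H(\alpha))-\eta_s(H_0(\alpha))\big)
&= \frac{1}{2\pi i}\int_\alpha^\infty C_s\,\eta_s(\lambda)\,\textup{Tr}\big(S_\alpha(\lambda)^*S_\alpha'(\lambda)\big)\,\d\lambda \\
&\quad + \sum_{k=1}^K C_s\,\eta_s(\lambda_k(\alpha))\,M(\lambda_k(\alpha)) + C_s\,\eta_s(\alpha)\big(-N-\xi_\alpha(\alpha+)\big),
\end{align*}
where I have used $\xi_\alpha(\alpha-)=-N+N_0$ and $M(\alpha)=N_0$ to rewrite the boundary term.

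Next I would note that every term here extends meromorphically to a neighbourhood of $s=\tfrac12$, so that the identity persists there and residues can be taken term by term. The left-hand side is covered by Lemma \ref{lem:spec-flow-residue} together with Proposition \ref{prop:one-form-res}, which is precisely the remark recorded just before the statement: the residue of the Birman-Kre\u{\i}n contribution to \eqref{eq:spec-flow-residue} at $s=\tfrac12$ exists. For the discrete and boundary terms I would use the elementary identity
\begin{align*}
C_s\,\eta_s(x) &= C_s - 2\int_0^x(1+v^2)^{-s}\,\d v,\qquad x>0,
\end{align*}
in which the integral is entire in $s$ and $C_s$ has a simple pole at $s=\tfrac12$ of residue $1$; the spectral-integral term is then meromorphic, being the difference of the other three.

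To conclude I would take residues at $s=\tfrac12$. From the displayed identity for $C_s\eta_s(x)$ one gets $\Res{s=\tfrac12}\big(C_s\,\eta_s(x)\big)=1$ for every $x>0$, and since every $\lambda_k(\alpha)>0$ and $\alpha>0$ the discrete term contributes $\sum_{k=1}^K M(\lambda_k(\alpha))=N$ while the boundary term contributes $-N-\xi_\alpha(\alpha+)$. As $\xi_\alpha(\cdot)=\xi(\cdot-\alpha)$ and both functions are $C^1$ to the right of the edge of the continuous spectrum, $\xi_\alpha(\alpha+)=\xi(0+)=-N-N_{res}$ by Theorem \ref{thm:spec-shift-values}, so the boundary term contributes $N_{res}$; collecting the pieces yields exactly the stated identity.

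The step I expect to be the main obstacle is the analytic bookkeeping for the non-compactly-supported $\eta_s$: checking carefully that \eqref{eq:ssf-defining}, hence the Birman-Kre\u{\i}n formula, genuinely applies to $\chi\eta_s$, and that the approximation by $C_c^\infty$ functions converges on both sides simultaneously. This hinges on the competition between the polynomial growth of $\xi_\alpha$ and the decay of $\eta_s$ and its derivatives, which only becomes favourable for $\textup{Re}(s)>\tfrac{n}{4}+\tfrac12$, so that the value at $s=\tfrac12$ is reached only after meromorphic continuation. By contrast, the meromorphicity of the spectral integral near $s=\tfrac12$ is essentially free once the identity for $C_s\eta_s(x)$ is available, since it reduces to the already-established existence of the residue of the Birman-Kre\u{\i}n contribution to \eqref{eq:spec-flow-residue}.
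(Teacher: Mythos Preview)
Your proposal is correct and follows essentially the same route as the paper: apply the Birman--Kre\u{\i}n trace formula to $\eta_s$ for $\textup{Re}(s)$ large, multiply by $C_s$, and take the residue at $s=\tfrac12$ using $\Res{s=1/2}(C_s\eta_s(x))=1$ for $x>0$ together with $\xi_\alpha(\alpha+)=\xi(0+)=-N-N_{res}$.

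Two small remarks. First, your claim that $\eta_s$ is discontinuous at $0$ is not correct: from the defining integral $\eta_s(x)=C_s^{-1}\int_1^\infty x(1+wx^2)^{-s}w^{-1/2}\,\d w$ one sees that $\eta_s$ is odd and smooth on all of $\R$, with $\eta_s'(x)=-\tfrac{2}{C_s}(1+x^2)^{-s}$. So the cutoff $\chi$ is unnecessary; the paper simply observes that for $s>\tfrac{n}{2}+1$ the decay condition \eqref{eq:ssf-condition} is satisfied by $\eta_s$ itself and applies the trace formula directly (Theorem \ref{thm:ssf-properties} is already stated for non-compactly-supported $f$). Your cutoff-and-approximate argument is not wrong, just superfluous. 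Second, the paper organises the boundary term as $\xi(0-)-\xi(0+)-N_0=N_{res}$ rather than first rewriting it as $-N-\xi_\alpha(\alpha+)$, but this is purely cosmetic and both simplifications rely on the same inputs from Theorems \ref{thm:ssf-properties} and \ref{thm:spec-shift-values}.
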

\begin{proof}
Choose $s > \frac{n}{2}+1$, so that Equation \eqref{eq:ssf-condition} is satisfied for $\eta_s$ and thus the Birman-Kre\u{\i}n trace formula can be applied to $\eta_s$ . Enumerate the distinct eigenvalues of $H(\alpha)$ as $0 < \lambda_1(\alpha) < \cdots < \lambda_K(\alpha) \leq \alpha$. We prove the result in the case $\lambda_K(\alpha) = \alpha$, that is in  the case that zero is an eigenvalue for $H = H_0+V$. Apply the Birman-Kre\u{i}n trace formula to obtain
\begin{align*}
\textup{Tr} \left(\eta_s(H(\alpha)) - \eta_s(H_0(\alpha)) \right) &= \frac{1}{2\pi i } \int_\alpha^\infty\eta_s(\lambda) \textup{Tr}(S_\alpha(\lambda)^*S_\alpha'(\lambda)) \, \d \lambda +\sum_{k=1}^{K-1} M(\lambda_k(\alpha)) \eta_s(\lambda_k(\alpha)) \\
&+M(\alpha)\eta_s(\alpha)+\eta_s(\alpha) \left(\xi_\alpha(\alpha-)-\xi_\alpha(\alpha+)-M(\alpha) \right), 
\end{align*}
where $\xi_\alpha$ is the spectral shift function for the pair $(H(\alpha), H_0(\alpha))$ and $M(\lambda_j(\alpha))$ denotes the multiplicity of the eigenvalue $\lambda_j(\alpha)$ for the operator $H(\alpha)$. Recall that by construction, we have $\xi_\alpha(\lambda) = \xi(\lambda-\alpha)$ so that $\xi_\alpha(\alpha\pm)=\xi(0\pm)$.
Thus after multiplying by $C_s$ we have
\begin{align}\label{eq:eta-trace-mero}
C_s\textup{Tr} \left(\eta_s(H(\alpha)) - \eta_s(H_0(\alpha)) \right) &= \frac{C_s}{2\pi i } \int_\alpha^\infty\eta_s(\lambda) \textup{Tr}(S_\alpha(\lambda)^*S_\alpha'(\lambda)) \, \d \lambda \\
&+C_s\sum_{k=1}^{K} M(\lambda_k(\alpha)) \eta_s(\lambda_k(\alpha))+C_s \eta_s(\alpha) \left(\xi(0-)-\xi(0+)-N_0 \right). \nonumber
\end{align}
Observe that for $x \neq 0$ we have $\displaystyle \Res{s=\frac12} \left( C_s \eta_s(x) \right) = \textup{sign}(x)$.

The left hand side of Equation \eqref{eq:eta-trace-mero} has a residue at $s=\frac12$ if and only if the first term on the right hand side does. 
Note that by Theorem \ref{thm:spec-shift-values} we have $\xi(0-)-\xi(0+)-N_0 = N_{res}$. It remains to take the residue at $s = \frac12$. 

By construction we have $\lambda_j(\alpha) > 0$ for all $j$ and thus
\begin{align*}
&\Res{s=\frac12} \left( C_s \textup{Tr}\left( \eta_s(H(\alpha))-\eta_s(H_0(\alpha)) \right) \right) \\
&= \Res{s=\frac12} \Bigg(\frac{C_s}{2\pi i } \int_\alpha^\infty\eta_s(\lambda) \textup{Tr}(S_\alpha(\lambda)^*S_\alpha'(\lambda)) \, \d \lambda +\sum_{k=1}^{K} M(\lambda_k) C_s\eta_s(\lambda_k(\alpha)) +C_s \eta_s(\alpha) N_{res} \Bigg) \\
&= N+N_{res} + \Res{s=\frac12} \left(\frac{1}{2\pi i} \int_\alpha^\infty C_s \eta_s(\lambda) \textup{Tr}(S_\alpha(\lambda)^*S_\alpha'(\lambda)) \, \d \lambda \right),
\end{align*}
as claimed.
\end{proof}
We can now compute the residue of the Birman-Kre\u{\i}n integral contribution to the spectral flow with the aid of a technical result.

\begin{lemma}\label{lem:residue-dominated}
Fix $\beta \geq 0$ and $f, g: \R^+ \to \C$ with $f-g \in L^1(\R^+)$. Suppose in addition that
\begin{align*}
&\Res{s=\frac12} \left( C_s \int_\beta^\infty \eta_s(\lambda) f(\lambda-\beta) \, \d \lambda \right)
\end{align*}
exists. Then 
\begin{align*}
\Res{s=\frac12} \left( C_s \int_\beta^\infty \eta_s(\lambda) f(\lambda-\beta) \, \d \lambda \right) &= \int_0^\infty (f(\lambda)-g(\lambda)) \, \d \lambda+\Res{s=\frac12} \left( C_s \int_\beta^\infty \eta_s(\lambda) g(\lambda-\beta) \, \d \lambda \right).
\end{align*}
\end{lemma}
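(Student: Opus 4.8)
The plan is to reduce everything, by linearity, to the single special case ``$h\in L^1(\R^+)$ compared with $0$''. Set $h=f-g\in L^1(\R^+)$, and for $\phi\colon\R^+\to\C$ of at most polynomial growth write $\mathcal I_s(\phi)=C_s\int_\beta^\infty\eta_s(\lambda)\phi(\lambda-\beta)\,\d\lambda$ for the (suitably continued) quantity in the statement. On the region where $\textup{Re}(s)$ is large enough that all the relevant integrals converge absolutely, $\phi\mapsto\mathcal I_s(\phi)$ is linear, so $\mathcal I_s(f)=\mathcal I_s(g)+\mathcal I_s(h)$ there, and hence — by analytic continuation — wherever the three are defined, in particular for real $s$ slightly above $\tfrac12$. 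It therefore suffices to prove the identity $\Res{s=\frac12}\mathcal I_s(h)=\int_0^\infty h(\lambda)\,\d\lambda$ for every $h\in L^1(\R^+)$, where throughout $\Res{s=\frac12}$ of a function holomorphic on $\{\textup{Re}(s)>\tfrac12\}$ is understood as the one-sided limit $\lim_{s\downarrow\frac12}(s-\tfrac12)(\,\cdot\,)$. Granting this, $(s-\tfrac12)\mathcal I_s(g)=(s-\tfrac12)\mathcal I_s(f)-(s-\tfrac12)\mathcal I_s(h)$ has a limit as $s\downarrow\tfrac12$ (first term by hypothesis, second equal to $\int_0^\infty h$), so $\Res{s=\frac12}\mathcal I_s(g)$ exists and the stated equality $\Res{s=\frac12}\mathcal I_s(f)=\int_0^\infty(f-g)+\Res{s=\frac12}\mathcal I_s(g)$ follows.

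\textbf{The key identity.} To prove $\Res{s=\frac12}\mathcal I_s(h)=\int_0^\infty h$ for $h\in L^1(\R^+)$, substitute $\mu=\lambda-\beta$ to get $\mathcal I_s(h)=C_s\int_0^\infty\eta_s(\mu+\beta)h(\mu)\,\d\mu$, an absolutely convergent integral for real $s>\tfrac12$ since $0\le\eta_s(x)\le 1$ for $x\ge 0$. Then $(s-\tfrac12)\mathcal I_s(h)=\big((s-\tfrac12)C_s\big)\int_0^\infty\eta_s(\mu+\beta)h(\mu)\,\d\mu$, and the prefactor tends to $1$ because $C_s$ has a simple pole at $s=\tfrac12$ with residue one. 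For the integral I would use two elementary facts about $\eta_s$ at real $s>\tfrac12$, both read off from the formula $\eta_s(x)=1-\tfrac{2}{C_s}\int_0^x(1+v^2)^{-s}\,\d v$ valid for $x>0$: first $0\le\eta_s(x)\le 1$, since $\int_0^x(1+v^2)^{-s}\,\d v\in[0,\tfrac12 C_s]$; and second $\eta_s(x)\to 1$ as $s\downarrow\tfrac12$ for each fixed $x>0$, since $\int_0^x(1+v^2)^{-s}\,\d v$ stays bounded (by $\operatorname{arcsinh}x$) while $C_s\to\infty$. As $\eta_s(0)=0$, the pointwise convergence $\eta_s(\mu+\beta)\to 1$ holds for Lebesgue-almost every $\mu\ge 0$, and dominated convergence with dominating function $|h|\in L^1(\R^+)$ gives $\int_0^\infty\eta_s(\mu+\beta)h(\mu)\,\d\mu\to\int_0^\infty h(\mu)\,\d\mu$, which finishes the proof.

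\textbf{Expected obstacle.} The only delicate point is interpretive rather than computational. For a generic $h\in L^1(\R^+)$ the function $s\mapsto\mathcal I_s(h)$ need not be meromorphic at $s=\tfrac12$: on top of the genuine simple pole $\tfrac1{s-1/2}\int_0^\infty h$ carried by the $L^1$ mass it may have a milder singularity (of order $\log\log\tfrac1{s-1/2}$, say, for very slowly decaying $h$), so ``$\Res{s=\frac12}$'' has to mean the one-sided limit above rather than a Laurent coefficient. This is harmless in the applications of this lemma, where $g$ is a high-energy polynomial and $\mathcal I_s(g)$ is genuinely meromorphic with a simple pole by the residue computations of Lemmas~\ref{lem:even-one-form-residue} and~\ref{lem:consts-beta}, so that $\Res{s=\frac12}\mathcal I_s(g)$ — and hence, by the above, $\Res{s=\frac12}\mathcal I_s(f)$ — exists in the strict sense. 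One should also record at the outset that $f-g\in L^1(\R^+)$ together with the polynomial growth of $f$ (hence of $g$) is precisely what makes all the integrals absolutely convergent for $\textup{Re}(s)$ large and legitimises the linearity step.
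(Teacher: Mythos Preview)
Your proof is correct and follows essentially the same route as the paper: split $f$ into $g$ and $h=f-g$, then handle the $L^1$ piece by dominated convergence using $C_s\eta_s(x)\to 1$ for $x>0$. Your write-up is in fact more careful than the paper's on two points: you supply the uniform bound $0\le\eta_s(x)\le1$ that justifies the dominated convergence step, and your remark that $\Res{s=\frac12}$ should be read as the one-sided limit $\lim_{s\downarrow\frac12}(s-\tfrac12)(\cdot)$ rather than a Laurent coefficient is well taken, since nothing in the hypotheses guarantees meromorphy of $\mathcal I_s(h)$.
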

\begin{proof}
Adding zero gives
\begin{align}\label{eq:residue-lhs-rhs}
&\Res{s=\frac12} \left( C_s \int_\beta^\infty \eta_s(\lambda) f(\lambda-\beta) \, \d \lambda \right) \\
&= \Res{s=\frac12} \left( C_s \int_\beta^\infty \eta_s(\lambda) (f(\lambda-\beta)-g(\lambda-\beta)) \, \d \lambda + C_s \int_\beta^\infty \eta_s(\lambda) g(\lambda-\beta) \, \d \lambda \right). \nonumber
\end{align}

%

One straightforwardly checks that 
$\Res{s=1/2}\eta_s(x)=1$ for all $x>0$.
Then an application of the dominated convergence theorem allows us to compute that
\begin{align*}
\Res{s=\frac12} \left(\int_\beta^\infty \eta_s(\lambda) (f(\lambda-\beta)-g(\lambda-\beta)) \, \d \lambda \right) &= \int_\beta^\infty \Res{s=\frac12} \eta_s(\lambda) (f(\lambda-\beta)-g(\lambda-\beta)) \, \d \lambda \\
&= \int_\beta^\infty (f(\lambda-\beta)-g(\lambda-\beta) ) \, \d \lambda.
\end{align*}
Since the residue of the first term on the right-hand side of Equation \eqref{eq:residue-lhs-rhs} exists, so does the residue of the second term on the right-hand side. Making the substitution $u = \lambda-\beta$ completes the proof.
\end{proof}
\begin{prop}\label{prop:birman-krein-res}
Suppose that $V \in C_c^\infty(\R^n)$ and $\alpha > -2\nu$. Then for $n$ odd we have
\begin{align*}
 \Res{s=\frac12} \left(\frac{C_s}{2\pi i} \int_\alpha^\infty  \textup{Tr}(S_\alpha(\lambda)^*S_\alpha'(\lambda)) \eta_s(\lambda) \, \d \lambda \right) &= \frac{1}{2\pi i} \int_0^\infty \left( \textup{Tr}(S(\lambda)^*S'(\lambda)) - p_n(\lambda) \right) \, \d \lambda.
\end{align*}
If $n$ is even we have
\begin{align*}
\Res{s=\frac12} \Big(\frac{C_s}{2\pi i} \int_\alpha^\infty \textup{Tr}(S_\alpha(\lambda)^*S_\alpha'(\lambda)) &\eta_s(\lambda) \, \d \lambda \Big) 
= \frac{1}{2\pi i} \int_0^\infty \left( \textup{Tr}(S(\lambda)^*S'(\lambda)) - p_n(\lambda) \right) \, \d \lambda \\
&\quad + \sum_{\ell=1}^{\lfloor \frac{n-1}{2} \rfloor} \mathop{\sum_{j=0}^{\frac{n}{2}-\ell}}_{j \textup{ even }} \binom{\frac{n}{2}-\ell}{j} \frac{(-1)^{\frac{n}{2}-\ell -\frac{j}{2}} \alpha^{\frac{n}{2}-\ell-j} \Gamma\left(\frac{j+1}{2} \right)}{2 (2\pi i) \Gamma\left(\frac{j}{2}+1 \right) \Gamma\left( \frac12 \right)} C_\ell(n,V),
\end{align*}
where the $C_\ell(n,V)$ are the high-energy coefficients for $P_n$ defined in Equation \eqref{eq:high-energy-coefficients}
\end{prop}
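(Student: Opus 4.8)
The plan is to split the left-hand side into an $L^1$-integral over $\R^+$ plus an explicit residue, by applying Lemma \ref{lem:residue-dominated} with the high-energy polynomial $p_n$ as comparison function. First I would invoke Theorem \ref{thm:ssf-properties} to write $S_\alpha(\lambda) = S(\lambda-\alpha)$, so that the integrand equals $C_s\eta_s(\lambda)f(\lambda-\alpha)$ with $f := \tfrac{1}{2\pi i}\textup{Tr}(S(\cdot)^*S'(\cdot))$. Setting $g := \tfrac{1}{2\pi i}p_n$, Lemma \ref{lem:ssf-L1} gives $f-g \in L^1(\R^+)$, and the existence of $\Res{s=\frac12}\big(C_s\int_\alpha^\infty \eta_s(\lambda)f(\lambda-\alpha)\,\d\lambda\big)$ follows from Lemma \ref{lem:kernel-vanish} combined with Proposition \ref{prop:one-form-res} and the finiteness of $\textup{sf}(H_t(\alpha))$. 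Lemma \ref{lem:residue-dominated} with $\beta = \alpha$ then rewrites the left-hand side as $\tfrac{1}{2\pi i}\int_0^\infty\big(\textup{Tr}(S(\lambda)^*S'(\lambda)) - p_n(\lambda)\big)\,\d\lambda$ plus $\Res{s=\frac12}\big(\tfrac{C_s}{2\pi i}\int_\alpha^\infty \eta_s(\lambda)p_n(\lambda-\alpha)\,\d\lambda\big)$; the first term is precisely the contribution common to both cases, so everything reduces to evaluating the second.

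Next I would evaluate the residual term monomial by monomial, using $p_n(\mu) = \sum_{\ell=1}^{\lfloor(n-1)/2\rfloor} c_\ell(n,V)\mu^{\frac n2-\ell-1}$ from Definition \ref{defn:high-energy-poly1}. Writing $C_s\eta_s(\lambda) = 2\int_\lambda^\infty(1+v^2)^{-s}\,\d v$ (legitimate because $\lambda > \alpha > 0$ on the range of integration) and interchanging the order of the two integrations — valid by absolute convergence for $\Re(s)$ large, with the resulting identity then propagating by meromorphic continuation — one obtains, after the substitution $u = v-\alpha$, that $C_s\int_\alpha^\infty \eta_s(\lambda)(\lambda-\alpha)^{\frac n2-\ell-1}\,\d\lambda = \tfrac{2}{n/2-\ell}\int_0^\infty u^{\frac n2-\ell}\big(1+(u+\alpha)^2\big)^{-s}\,\d u$, which is exactly the type of integral already analysed in Section \ref{sec:one-form}.

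When $n$ is odd, $\tfrac n2-\ell$ is a positive half-integer for every $\ell$ in the sum, and the estimate used in the odd case of Proposition \ref{prop:one-form-res} (equivalently, an expansion of the integrand at infinity) shows this integral is holomorphic at $s = \tfrac12$; hence the residual term vanishes and the first formula follows. When $n$ is even, $\tfrac n2-\ell$ is a positive integer, so Lemma \ref{lem:even-one-form-residue} applies with its parameter taken to be $\tfrac n2-\ell+1$, and combining it with $c_\ell(n,V) = (\tfrac n2-\ell)C_\ell(n,V)$ and summing over $\ell$ gives the stated finite sum, the factor $\tfrac{2}{n/2-\ell}\cdot(\tfrac n2-\ell) = 2$ accounting for the $4 \mapsto 2$ in the denominator relative to Lemma \ref{lem:even-one-form-residue}. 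The hard part is not conceptual but bookkeeping: justifying the interchange of integrations together with the meromorphic continuation, and carefully matching the combinatorial constants coming out of Lemma \ref{lem:even-one-form-residue}; both are of the same kind already handled in Section \ref{sec:one-form}, so I do not anticipate a serious obstacle.
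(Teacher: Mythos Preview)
Your proposal is correct and follows essentially the same route as the paper: apply Lemma~\ref{lem:residue-dominated} with $g=\tfrac{1}{2\pi i}p_n$, then reduce the residual $p_n$-term (the paper uses integration by parts, your Fubini step is equivalent) to the integral $\tfrac{2}{n/2-\ell}\int_0^\infty u^{\frac{n}{2}-\ell}(1+(u+\alpha)^2)^{-s}\,\d u$, and handle its residue by parity of $n$. The only cosmetic difference is that for $n$ even the paper redoes the binomial expansion in place rather than quoting Lemma~\ref{lem:even-one-form-residue}; your direct appeal to that lemma is the tidier way to the same constants.
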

\begin{proof}
Note that by Lemma \ref{lem:ssf-L1} the map $[0,\infty) \ni \lambda \mapsto \textup{Tr}(S(\lambda)^*S'(\lambda)) -p_n(\lambda)$ is integrable on $[0,\infty)$ and thus we can apply Lemma \ref{lem:residue-dominated} to obtain
\begin{align*}
&\Res{s=\frac12} \left( \frac{C_s}{2\pi i} \int_\alpha^\infty \left( \textup{Tr}(S_\alpha(\lambda)^*S_\alpha'(\lambda)) \right) \eta_s(\lambda) \, \d \lambda \right) \\
&=\frac{1}{2 \pi i} \int_0^\infty \left( \textup{Tr}(S(\lambda)^*S'(\lambda)) - p_n(\lambda) \right) \, \d \lambda + \frac{1}{2\pi i}\Res{s=\frac12} \left( C_s \int_\alpha^\infty p_n(\lambda-\alpha) \eta_s(\lambda) \, \d \lambda \right).
\end{align*}
Thus it remains to compute
\begin{align*}
\Res{s=\frac12} \left(\frac{C_s}{2\pi i} \int_\alpha^\infty p_n(\lambda-\alpha) \eta_s(\lambda) \, \d \lambda \right) &= \Res{s=\frac12} \left( \sum_{\ell=1}^{\lfloor \frac{n-1}{2} \rfloor} \frac{c_\ell(n,V)}{2\pi i}  C_s \int_\alpha^\infty (\lambda-\alpha)^{\frac{n}{2}-\ell-1} \eta_s(\lambda) \, \d \lambda \right) \\
&=  \Res{s=\frac12} \left( \sum_{\ell=1}^{\lfloor \frac{n-1}{2} \rfloor} \frac{c_\ell(n,V)}{2\pi i} C_s \int_0^\infty u^{\frac{n}{2}-\ell-1} \eta_s(u+\alpha) \, \d u \right).
\end{align*}
We show that the residue of each of the terms in the sum exist individually, so that the summation can be passed through the residue. First we consider $n$ odd. We integrate by parts to find
\begin{align*}
 C_s \int_0^\infty &u^{\frac{n}{2}-\ell-1} \eta_s(u+\alpha) \, \d u   
 =  \int_0^\infty \int_{u+\alpha}^\infty u^{\frac{n}{2}-\ell-1} (1+v^2)^{-s} \, \d v \, \d u  \\
&= \left[ \frac{u^{\frac{n}{2}-\ell}}{\frac{n}{2}-\ell} \int_{u+\alpha}^\infty (1+v^2)^{-s} \, \d v \right]_0^\infty   +  \frac{1}{\frac{n}{2}-\ell} \int_0^\infty u^{\frac{n}{2}-\ell} (1+(u+\alpha)^2)^{-s} \, \d u \\
&=\frac{1}{\frac{n}{2}-\ell} \int_0^\infty u^{\frac{n}{2}-\ell} (1+u^2)^{-s} \, \d u
+\frac{1}{\frac{n}{2}-\ell} \int_0^\infty u^{\frac{n}{2}-\ell} \big((1+(u+\alpha)^2)^{-s}-(1+u^2)^{-s} \, \d u\\
&=\frac{\Gamma\left(\frac{n}{4}-\frac{\ell}{2}+\frac12\right)\Gamma\left(s-\frac{n}{4}+\frac{\ell}{2}-\frac12 \right)}{2\Gamma(s)}+holo(s)
\end{align*}
where $holo$ is a function holomorphic at $s=\frac12$.
Since $n$ is odd, we have $1 - \frac{n}{4} + \frac{\ell}{2} \notin \Z$ and thus 
\begin{align*}
\Res{s=\frac12} \left(\frac{C_s}{2\pi i} \int_\alpha^\infty p_n(\lambda-\alpha) \eta_s(\lambda) \, \d \lambda \right) &= \sum_{\ell=1}^{\lfloor \frac{n-1}{2} \rfloor} \frac{c_\ell(n,V)}{2\pi i} \Res{s=\frac12} \left( C_s \int_0^\infty u^{\frac{n}{2}-\ell-1} \eta_s(u+\alpha) \, \d u \right) \\
&= 0.
\end{align*}
Now we consider $n$ even. In this case we integrate by parts to write
\begin{align*}
\int_\alpha^\infty (\lambda-\alpha)^{\frac{n}{2}-\ell-1} \eta_s(\lambda) \, \d \lambda &= -\frac{1}{\frac{n}{2}-\ell} \int_\alpha^\infty (\lambda-\alpha)^{\frac{n}{2}-\ell} \frac{\d}{\d \lambda} \eta_s(\lambda) \, \d \lambda \\
&= \frac{1}{\frac{n}{2}-\ell} \int_\alpha^\infty (\lambda-\alpha)^{\frac{n}{2}-\ell} (1+\lambda^2)^{-s} \, \d \lambda.
\end{align*}
We now use the binomial expansion to obtain
\begin{align*}
\int_\alpha^\infty (\lambda-\alpha)^{\frac{n}{2}-\ell-1} \eta_s(\lambda) \, \d \lambda &= \frac{1}{\frac{n}{2}-\ell} \sum_{j=0}^{\frac{n}{2}-\ell} \binom{\frac{n}{2}-\ell}{j}  (-\alpha)^{\frac{n}{2}-\ell-j} \int_\alpha^\infty \lambda^j (1+\lambda^2)^{-s}\, \d \lambda.
\end{align*}
Returning to the residue calculation we find
\begin{align*}
&\Res{s=\frac12} \left( C_s \int_\alpha^\infty (\lambda-\alpha)^{\frac{n}{2}-\ell-1} \eta_s(\lambda) \, \d \lambda \right) \\
&= \frac{1}{\frac{n}{2}-\ell} \sum_{j=0}^{\frac{n}{2}-\ell} \binom{\frac{n}{2}-\ell}{j} (-\alpha)^{\frac{n}{2}-\ell-j}\Res{s=\frac12} \left(\int_\alpha^\infty \lambda^j (1+\lambda^2)^{-s}\, \d \lambda \right) \\
&= \frac{1}{\frac{n}{2}-\ell} \sum_{j=0}^{\frac{n}{2}-\ell} \binom{\frac{n}{2}-\ell}{j} (-\alpha)^{\frac{n}{2}-\ell-j}\Res{s=\frac12} \left(\int_0^\infty \lambda^j (1+\lambda^2)^{-s}\, \d \lambda \right) \\
&= \frac{1}{\frac{n}{2}-\ell} \sum_{j=0}^{\frac{n}{2}-\ell} \binom{\frac{n}{2}-\ell}{j} (-\alpha)^{\frac{n}{2}-\ell-j}\Res{s=\frac12} \left(\frac{\Gamma\left( \frac{j+1}{2} \right) \Gamma\left(s-\frac{j+1}{2} \right)}{2\Gamma(s)} \right) \\
&= \frac{1}{\frac{n}{2}-\ell} \mathop{\sum_{j=0}^{\frac{n}{2}-\ell}}_{j \textup{ even }} \binom{\frac{n}{2}-\ell}{j} \frac{(-1)^{\frac{n}{2}-\ell -\frac{j}{2}} \alpha^{\frac{n}{2}-\ell-j} \Gamma\left(\frac{j+1}{2} \right)}{2 \Gamma\left(\frac{j}{2}+1 \right) \Gamma\left( \frac12 \right)},
\end{align*}
from which the statement follows by observing the relation $c_\ell(n,V) = \left(\frac{n}{2}-\ell \right) C_\ell(n,V)$.
\end{proof}


\section{The spectral flow formula and Levinson's theorem}\label{sec:levinson}

In this section we return to the spectral flow formula of Equation \eqref{eq:spec-flow-residue} applied to the path $H_t(\alpha)$ and, using the results of Section \ref{sec:spec-flow} we can prove Levinson's theorem in all dimensions. 

\begin{thm}\label{thm:spec-flow-final}
Suppose that $V \in C_c^\infty(\R^n)$ and $\alpha > -2\nu$. Then the spectral flow along the path $H_t(\alpha)$ is given by
\begin{align*}
\textup{sf}(H_t(\alpha)) &= \frac12 (N+N_{res})+ \frac{1}{4\pi i} \int_0^\infty \left(\textup{Tr}(S(\lambda)^*S'(\lambda)) - p_n(\lambda) \right) \, \d \lambda -\frac12 \beta_n(V).
\end{align*}
\end{thm}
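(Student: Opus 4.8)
The plan is to specialise the residue form of the spectral flow formula, Lemma~\ref{lem:spec-flow-residue}, to the path $D_t = H_t(\alpha) = H_0+tV+\alpha$, for which $\dot D_t = V$, $D_0 = H_0(\alpha)$ and $D_1 = H(\alpha)$. First one checks the hypotheses: the path is piecewise $C_\Gamma^1$ (indeed $C_\Gamma^1$, as recorded in Section~\ref{sec:stat-scat}), and $\dot D_t(\textup{Id}+D_t^2)^{-s} = V(\textup{Id}+H_t(\alpha)^2)^{-s}$ is trace-class for $\textup{Re}(s)>\tfrac n4$, since $D_t^2$ is a differential operator of order $4$ with the same principal symbol as $H_0^2$, so $V(\textup{Id}+H_t(\alpha)^2)^{-s}$ is a compactly supported pseudodifferential operator of order $-4s$, hence trace-class exactly when $4s>n$. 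Because $\alpha>-2\nu\ge 0$ forces $0$ to lie strictly below $\sigma(H_0(\alpha))$ and $\sigma(H(\alpha))$, the kernel projections vanish (the unnumbered lemma preceding Lemma~\ref{lem:kernel-vanish}), so Lemma~\ref{lem:spec-flow-residue} becomes
\[
\textup{sf}(H_t(\alpha)) = \Res{s=\frac12}\!\left(\int_0^1 \Tr\!\left(V(\textup{Id}+H_t(\alpha)^2)^{-s}\right)\d t\right) + \frac12\Res{s=\frac12}\!\left(C_s\Tr\!\left(\eta_s(H(\alpha))-\eta_s(H_0(\alpha))\right)\right).
\]

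I would then evaluate the two residues separately using Section~\ref{sec:spec-flow}. The first ``integral of one form'' residue is exactly Proposition~\ref{prop:one-form-res}: it vanishes for $n$ odd, and for $n$ even it equals the displayed finite sum over $1\le\ell\le\tfrac n2$ and even $j$ of terms proportional to $\alpha^{\frac n2-\ell-j}\,C_\ell(n,V)$. For the $\eta$-residue I would apply Lemma~\ref{lem:kernel-vanish}, which — via $\Res_{s=1/2}C_s\eta_s(x)=\textup{sign}(x)$, the Birman-Kre\u{\i}n trace formula applied to $\eta_s$ (legitimate since $\eta_s$ satisfies \eqref{eq:ssf-condition} for $s$ large), and the value $\xi(0+)=-N-N_{res}$ of Theorem~\ref{thm:spec-shift-values} — produces $N+N_{res}$ together with the residue of the Birman-Kre\u{\i}n integral $\tfrac{C_s}{2\pi i}\int_\alpha^\infty\eta_s(\lambda)\Tr(S_\alpha(\lambda)^*S_\alpha'(\lambda))\,\d\lambda$. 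That remaining residue is precisely what Proposition~\ref{prop:birman-krein-res} computes: $\tfrac{1}{2\pi i}\int_0^\infty(\Tr(S(\lambda)^*S'(\lambda))-p_n(\lambda))\,\d\lambda$ for $n$ odd, and the same integral plus a second explicit $\alpha$-dependent sum for $n$ even.

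Assembling the pieces, for $n$ odd the one-form residue is $0$ and $\beta_n(V)=0$, so
\[
\textup{sf}(H_t(\alpha)) = \frac12(N+N_{res}) + \frac{1}{4\pi i}\int_0^\infty\!\left(\Tr(S(\lambda)^*S'(\lambda))-p_n(\lambda)\right)\d\lambda,
\]
which is the claim. For $n$ even one is left with this same expression plus the sum of the one-form residue and one half of the Birman-Kre\u{\i}n sum. Writing both sums over a common index set $1\le\ell\le\tfrac n2$, even $j$ with $0\le j\le\tfrac n2-\ell$, the summands with $\ell\le\lfloor\tfrac{n-1}{2}\rfloor=\tfrac n2-1$ occur with equal coefficients and opposite signs and cancel, while the one-form sum carries the extra term $\ell=\tfrac n2$, for which necessarily $j=0$ and which evaluates to $-\tfrac12\cdot\tfrac{1}{2\pi i}C_{n/2}(n,V)=-\tfrac12\beta_n(V)$ by \eqref{eq:beta-defn}. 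This produces the formula for $n$ even.

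The only substantive obstacle is this last cancellation in the even case: one must transcribe the prefactors coming from Propositions~\ref{prop:one-form-res} and~\ref{prop:birman-krein-res} into a uniform shape, verify the term-by-term cancellation for $\ell<\tfrac n2$, and confirm that the surviving top term reproduces exactly $\beta_n(V)$ through its definition in terms of $C_{n/2}(n,V)$ in \eqref{eq:high-energy-coefficients} and \eqref{eq:beta-defn}. A useful consistency check throughout is that $\textup{sf}(H_t(\alpha))$ cannot depend on $\alpha$: since $\alpha>-2\nu$ keeps $0$ out of $\sigma(H_t(\alpha))$ for every $t$ (by concavity of $\inf\sigma(H_0+tV)$ in $t$, all eigenvalues of $H_t(\alpha)$ exceed $\nu+\alpha>0$), varying $\alpha$ is a homotopy of paths through invertibles, and in fact $\textup{sf}(H_t(\alpha))=0$; hence the $\alpha$-dependent parts of the two explicit sums are obliged to cancel, leaving only the $\alpha$-free term $-\tfrac12\beta_n(V)$ — exactly what the bookkeeping yields.
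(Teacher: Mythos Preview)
Your proposal is correct and follows essentially the same route as the paper's own proof: specialise Lemma~\ref{lem:spec-flow-residue} to $D_t=H_t(\alpha)$ with vanishing kernel projections, evaluate the one-form residue via Proposition~\ref{prop:one-form-res} and the $\eta$-residue via Lemma~\ref{lem:kernel-vanish} together with Proposition~\ref{prop:birman-krein-res}, and for even $n$ check that the two $\alpha$-dependent sums cancel except for the $\ell=\tfrac{n}{2}$ term, which yields $-\tfrac12\beta_n(V)$. Your added remarks on verifying the trace-class hypothesis and on the $\alpha$-independence of $\textup{sf}(H_t(\alpha))$ as a consistency check are extras not present in the paper, but the core argument is the same.
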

\begin{proof}
Lemma \ref{lem:spec-flow-residue} and Lemma \ref{lem:kernel-vanish} give that 
\begin{align*}
\textup{sf}(H_t(\alpha)) \nonumber &= \Res{s=\frac12} \left( C_s \int_0^1 \Tr \left( V (\textup{Id}+H_t(\alpha)^2)^{-s} \right) \d t + \frac{C_s}{2} \Tr \big( \eta_s(H(\alpha))-\eta_s(H_0(\alpha)) \big) \right).
\end{align*}
Suppose first that $n$ is odd. Applying Proposition \ref{prop:one-form-res} to the first term on the right-hand side and Proposition \ref{prop:birman-krein-res} to the second term gives
\begin{align*}
\textup{sf}(H_t(\alpha)) &= \frac12 (N+N_{res}) + \frac{1}{4\pi i} \int_0^\infty \left( \textup{Tr}(S(\lambda)^*S'(\lambda)) - p_n(\lambda) \right) \, \d \lambda.
\end{align*}
Now we consider $n$ even. Applying again Propositions \ref{prop:one-form-res} and \ref{prop:birman-krein-res} gives
\begin{align*}
\textup{sf}(H_t(\alpha)) &= \frac12 (N+N_{res}) + \frac{1}{4\pi i} \int_0^\infty \left( \textup{Tr}(S(\lambda)^*S'(\lambda)) - p_n(\lambda) \right) \, \d \lambda \\
&+ \frac12 \sum_{\ell=1}^{\lfloor \frac{n-1}{2} \rfloor} \mathop{\sum_{j=0}^{\frac{n}{2}-\ell}}_{j \textup{ even }} \binom{\frac{n}{2}-\ell}{j} \frac{(-1)^{\frac{n}{2}-\ell -\frac{j}{2}} \alpha^{\frac{n}{2}-\ell-j} \Gamma\left(\frac{j+1}{2} \right)}{2 (2\pi i) \Gamma\left(\frac{j}{2}+1 \right) \Gamma\left( \frac12 \right)} C_\ell(n,V)\\
&+ \sum_{\ell = 1}^{\frac{n}{2}} \mathop{\sum_{j=0}^{\frac{n}{2}-\ell}}_{j \textup{ even}} \binom{\frac{n}{2}-\ell}{j} \frac{(-1)^{\frac{n}{2}-\ell-\frac{j}{2}-1} \alpha^{\frac{n}{2}-\ell-j}\Gamma\left(\frac{j+1}{2}\right)}{4(2\pi i)\Gamma\left(\frac{j}{2}+1\right)\Gamma\left(\frac12 \right)} C_\ell(n,V).
\end{align*}
It remains to observe that for $n$ even we have $\lfloor \frac{n-1}{2} \rfloor = \frac{n}{2} - 1$ and thus
\begin{align*}
&\sum_{\ell = 1}^{\frac{n}{2}} \mathop{\sum_{j=0}^{\frac{n}{2}-\ell}}_{j \textup{ even}} \binom{\frac{n}{2}-\ell}{j} \frac{(-1)^{\frac{n}{2}-\ell-\frac{j}{2}-1} \alpha^{\frac{n}{2}-\ell-j}\Gamma\left(\frac{j+1}{2}\right)}{4(2\pi i)\Gamma\left(\frac{j}{2}+1\right)\Gamma\left(\frac12 \right)} C_\ell(n,V) \\
&= - \frac12 \sum_{\ell=1}^{\lfloor \frac{n-1}{2} \rfloor} \mathop{\sum_{j=0}^{\frac{n}{2}-\ell}}_{j \textup{ even }} \binom{\frac{n}{2}-\ell}{j} \frac{(-1)^{\frac{n}{2}-\ell -\frac{j}{2}} \alpha^{\frac{n}{2}-\ell-j} \Gamma\left(\frac{j+1}{2} \right)}{2 (2\pi i) \Gamma\left(\frac{j}{2}+1 \right) \Gamma\left( \frac12 \right)} C_\ell(n,V) -\frac12 \beta_n(V),
\end{align*}
where we have used the definition of $\beta_n(V)$ in Equation \eqref{eq:beta-defn}.
\end{proof}
We are now able to prove Levinson's theorem as a consequence of spectral flow along the path $H_t(\alpha)$.

\begin{thm}[Levinson's theorem]\label{thm:levinson}
Suppose that $V \in C_c^\infty(\R^n)$. Then the number $N$ of eigenvalues (counted with multiplicity) of $H = H_0+V$ is given by
\begin{align*}
-N &= \frac{1}{2\pi i}\int_0^\infty \left(\textup{Tr}(S(\lambda)^*S'(\lambda)) - p_n(\lambda) \right) \, \d \lambda - \beta_n(V) +N_{res},
\end{align*}
where $N_{res}$ is as defined in Theorem \ref{thm:spec-shift-values}.
\end{thm}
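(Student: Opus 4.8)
The plan is to obtain Levinson's theorem by equating the two expressions we have for $\textup{sf}(H_t(\alpha))$: the scattering-theoretic formula of Theorem~\ref{thm:spec-flow-final}, and the trivial value $0$. The shift by $\alpha$ was introduced precisely so that the regularised path $H_t(\alpha) = H_0 + tV + \alpha$ carries no spectral flow, with all of the ``eigenvalue-counting'' content transferred onto the right-hand side of Theorem~\ref{thm:spec-flow-final}; so the only thing left to prove is that this spectral flow indeed vanishes.

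First I would check that for any admissible $\alpha$ (say $\alpha > -2\nu$) the whole path $H_t(\alpha)$, $t \in [0,1]$, consists of strictly positive operators. For each unit vector $\psi$ in the form domain of $H_0$ the function $t \mapsto \langle \psi, (H_0 + tV)\psi\rangle$ is affine, so $t \mapsto \inf \sigma(H_0+tV)$ is an infimum of affine functions, hence concave on $[0,1]$, and therefore attains its minimum at an endpoint. That value is $\inf\sigma(H_0) = 0$ at $t=0$ and $\inf\sigma(H) = \nu$ at $t=1$ (using $\nu \leq 0$, and $\nu = 0$ if $H$ has no eigenvalues), so $\inf\sigma(H_0+tV) \geq \nu$ for all $t$. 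Consequently $\inf\sigma(H_t(\alpha)) \geq \nu + \alpha > 0$ for all $t \in [0,1]$, and in particular $0 \notin \sigma(H_t(\alpha))$ along the entire path.

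Next I would deduce $\textup{sf}(H_t(\alpha)) = 0$. Since $0$ is never in the spectrum, the bounded transforms $F(H_t(\alpha))$ form a norm-continuous path of invertible self-adjoint Fredholm operators with $\chi_{[0,\infty)}(F(H_t(\alpha))) = \textup{Id}$ for every $t$; in Phillips' definition the projections $P_i = \chi_{[0,\infty)}(F(H_{t_i}(\alpha)))$ all equal $\textup{Id}$, so each $\textup{Index}(P_{i-1}P_i) = 0$ and the spectral flow vanishes. Substituting this into Theorem~\ref{thm:spec-flow-final} gives
\[
0 = \tfrac12(N+N_{res}) + \frac{1}{4\pi i}\int_0^\infty\left(\textup{Tr}(S(\lambda)^*S'(\lambda)) - p_n(\lambda)\right)\,\d\lambda - \tfrac12\beta_n(V),
\]
where the integral converges by Lemma~\ref{lem:ssf-L1}; multiplying by $-2$ and rearranging yields the stated identity for $-N$, with $N_{res}$ as in Theorem~\ref{thm:spec-shift-values}.

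There is essentially no obstacle left at this stage: the genuinely hard analysis --- the residue evaluation of the ``integral of one form'' term via the pseudodifferential expansion (Proposition~\ref{prop:one-form-res}), the Birman--Kre\u{\i}n evaluation of the $\eta$ contribution (Proposition~\ref{prop:birman-krein-res}), and the exact cancellation of all the $\alpha$-dependent polynomial tails --- is already packaged into Theorem~\ref{thm:spec-flow-final}. The one step requiring a little care is the positivity of the full path $H_t(\alpha)$, i.e.\ the vanishing of the spectral flow, which is exactly the property the parameter $\alpha$ was chosen to guarantee; everything else is substitution and arithmetic.
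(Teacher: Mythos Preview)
Your proposal is correct and takes essentially the same approach as the paper: both argue that $\textup{sf}(H_t(\alpha))=0$ because the shifted path stays in the positive operators, then substitute into Theorem~\ref{thm:spec-flow-final} and rearrange. Your concavity argument for $t\mapsto\inf\sigma(H_0+tV)$ is a clean and more explicit justification of the positivity claim than the paper's one-line assertion.
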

\begin{proof}
By construction we know that for $\alpha > -2\nu$ we have
\begin{align}\label{eq:spec-flow-value}
\textup{sf}(H_t(\alpha)) &= 0,
\end{align}
since there is no spectrum which moves through zero from right to left as the path is traversed from $H_0(\alpha)$ to $H(\alpha)$. Substituting Equation \eqref{eq:spec-flow-value} into the result of Theorem \ref{thm:spec-flow-final} and solving for $N$ completes the proof.
\end{proof}

%


\begin{thebibliography}{99}



\bibitem{alexander24}
A. Alexander.
{\em Trace formula and Levinson's theorem as an index pairing in the presence of resonances},
arXiv preprint: \href{https://arxiv.org/abs/2402.15979}{https://arxiv.org/abs/2402.15979}, 2024.

\bibitem{alexander23thesis}
A. Alexander.
{\em Topological {L}evinson's theorem via index pairings and spectral flow}, PhD thesis, University of Wollongong, 2024.

\bibitem{ANRR}
A. Alexander, D. T. Nguyen, A. Rennie, S. Richard.
{\em Levinson's theorem for two-dimensional scattering systems: it was a surprise, it is now topological!}, to appear in Journal of Spectral Theory,
arXiv preprint: \href{https://arxiv.org/abs/2311.09650}{https://arxiv.org/abs/2311.09650}, 2023. 

\bibitem{AR23}
A. Alexander, A. Rennie.
{\em Levinson's theorem as an index pairing}, J. Funct. Anal., {\bf 286} (5), 2024.

\bibitem{AR23-4D}
A. Alexander, A. Rennie.
{\em The structure of the wave operator in four dimensions in the presence of resonances}, arXiv preprint: \href{https://arxiv.org/abs/2311.16438}{https://arxiv.org/abs/2311.16438}, 2023.




\bibitem{aps75}
M. F. Atiyah, V. K. Patodi, I. M. Singer.
{\em Spectral asymmetry and {R}iemannian geometry. {I}},
Math. Proc. Cambridge Philos. Soc., {\bf 77}, 1975, 45--69.

\bibitem{aps76}
M. F. Atiyah, V. K. Patodi, I. M. Singer.
{\em Spectral asymmetry and {R}iemannian geometry. {III}},
Math. Proc. Cambridge Philos. Soc., {\bf 79}, 1976, 71--99.

\bibitem{ACDS}
N. Azamov, A. Carey, P. Dodds, F. Sukochev.
{\em Operator integrals, spectral shift, and spectral flow},
Canad. J. Math., {\bf 61} (2), 2009, 241--263.

\bibitem{ACS}
N. Azamov, A. Carey, F. Sukochev.
{\em The spectral shift function and spectral flow},
Comm. Math. Phys., {\bf 276} (1), 2007, 51--91.

\bibitem{banuelos96}
R. Ba\~{n}uelos, A. S\'{a} Barreto.
{\em On the heat trace of {S}chr\"{o}dinger operators},
Comm. Partial Differential Equations, {\bf 20} (11-12), 1995, 2153--2164.



%

\bibitem{BCPRSW}
M. T. Benameur, A. Carey, J. Phillips, A. Rennie, F. Sukochev, K. Wojciechowski.
{\em An analytic  approach to spectral flow in von {N}eumann algebras}, 
in `Analysis, geometry and topology of elliptic operators', World Sci. Publ., 2006, 297--352.

\bibitem{birman62}
M. \v{S}. Birman, M. G. Kre\u{\i}n.
{\em On the theory of wave operators and scattering operators}, Dokl. Akad. Nauk SSSR, {\bf 144}, 1962, 475--478.

%

%
%


\bibitem{CP1}
A. Carey, J. Phillips.
{\em Unbounded {F}redholm modules and spectral flow}, 
Canad. J. Math. {\bf 50} (4), 1998, 673--718.


\bibitem{CPRS}
A. Carey, J. Phillips, A. Rennie, F. Sukochev. {\em The local index formula in semifinite von {N}eumann algebras {I}: {S}pectral flow}, Adv. Math., {\bf 202} (2), 2006, 451--516.

\bibitem{CPotSuk}
A. Carey, D. Potatov, F. Sukochev.
{\em Spectral flow is the integral of a one form on the {B}anach manifold of self-adjoint {F}redholm operators},
Adv. Math., {\bf 222} (5), 2009, 1809--1849.

\bibitem{colin81}
Y. Colin de Verdi\`ere.
{\em Une formule de traces pour l'op\'{e}rateur de {S}chr\"{o}dinger dans {$\R^3$}}, 
Ann. Sci. \'{E}cole Norm. Sup., {\bf 14} (1), 1981, 27--30.




\bibitem{folland}
G. B. Folland.
{\em How to integrate a polynomial over a sphere}, Amer. Math. Monthly, {\bf 108} (5), 2001.



\bibitem{guillope81}
L. Guillop\'{e}. {\em Une formule de trace pour l'op\'{e}rateur de {S}chr\"{o}dinger}, PhD Thesis, Universit\'{e} Joseph Fourier Grenoble, 1981. Available at \href{https://www.math.sciences.univ-nantes.fr/~guillope/LG/these_1981.pdf}{https://www.math.sciences.univ-nantes.fr/$\sim$guillope/LG/these\textunderscore1981.pdf}

%
%
%
%

%

\bibitem{jensen79}
A. Jensen, T. Kato. {\em Spectral properties of {S}chr\"{o}dinger operators and time-decay of
  the wave functions}, Duke Math. J., {\bf 46} (3), 1979, 583--611.

\bibitem{jensen80}
A. Jensen. {\em Spectral properties of {S}chr\"{o}dinger operators and time-decay of the wave functions results in {$L^{2}(\R^{m})$}, {$m\geq 5$}},  Duke Math. J., {\bf 47} (1), 1980, 57--80.

\bibitem{jensen81}
A. Jensen. {\em Time-delay in potential scattering theory. {S}ome ``geometric''
  results}, Comm. Math. Phys., {\bf 82} (3), 1981, 435--456.

%

\bibitem{jia12}
X. Jia, F. Nicoleau, X.P. Wang.
{\em A new {L}evinson's theorem for potentials with critical decay},
Ann. Henri Poincar\'{e}, {\bf 13} (1), 2012, 41-84.


 
\bibitem{kellendonk06}
J. Kellendonk, S. Richard.
{\em Levinson's theorem for {S}chr{\"{o}}dinger operators with point interaction: a topological approach},  J. Phys. A, {\bf 39} (46), 2006, 14397--14403.

\bibitem{kellendonk08}
J. Kellendonk, S. Richard.
{\em  On the structure of the wave operators in one dimensional potential
  scattering}, Math. Phys. Electron. J., {\bf 14}, 1-21, 2008.


\bibitem{kellendonk12}
J. Kellendonk, S. Richard. {\em On the wave operators and {L}evinson's theorem for potential
  scattering in {$\mathbb{R}^3$}}, Asian-Eur. J. Math., {\bf 5} (1), 2012.

\bibitem{krein53}
M. G. Kre\u{\i}n.
{\em On the trace formula in perturbation theory}, 
Mat. Sbornik N.S., {\bf 33 \slash 75}, 1953, 597--626.

  

\bibitem{lesch05}
M. Lesch.
{\em The uniqueness of the spectral flow on spaces of unbounded self-adjoint {F}redholm operators}, 
in `Spectral geometry of manifolds with boundary and decomposition of manifolds', ser. Contemp. Math. {\bf 366}, Amer. Math. Soc., 2005, 193--224.

\bibitem{levinson49}
N. Levinson. {\em On the uniqueness of the potential in a {S}chr\"{o}dinger equation for a given asymptotic phase}, Danske Vid. Selsk. Mat.-Fys. Medd., {\bf 25} (9), 1949.
  
  
%

\bibitem{newton77}
R. G. Newton.
{\em Noncentral potentials: the generalized Levinson theorem and the structure of the spectrum}, J. Mathematical Phys., {\bf 18} (7), 1977, 1348-1357.


%

\bibitem{phillips96}
J. Phillips.
{\em Self-adjoint {F}redholm operators and spectral flow},
Canad. Math. Bull. {\bf 39} (4), 1996, 460--467.

\bibitem{phillips97}
J. Phillips.
{\em Spectral flow in type {I} and {II} factors --- a new approach}, 
in `Cyclic cohomology and noncommutative geometry', Fields Inst. Commun., {\bf 17}, 1997, 137--153.



\bibitem{reedsimon79}
M. Reed, B. Simon.
{\em Methods of modern mathematical physics {III}: Scattering theory},
Academic Press, New York-London, 1979.


\bibitem{richard13}
S. Richard, R. Tiedra de Aldecoa. {\em New expressions for the wave operators of {S}chr\"{o}dinger operators in {$\mathbb{R}^3$}}, Lett. Math. Phys., {\bf 103} (11), 2013, 1207--1221.

\bibitem{richard13ii} S. Richard, R. Tiedra de Aldecoa. {\em Explicit formulas for the {S}chr\"{o}dinger wave operators in {$\mathbb{R}^2$}}, C. R. Math. Acad. Sci. Paris, {\bf 351} (5-6), 2013, 209--214.


\bibitem{richard21}
S. Richard, R. Tiedra de Aldecoa, L. Zhang.
{\em Scattering operator and wave operators for 2{D} {S}chr\"{o}dinger operators with threshold obstructions}, Complex Anal. Oper. Theory, {\bf 15} (6) 2021.



\bibitem{yafaev10}
D. R. Yafaev. {\em Mathematical scattering theory: Analytic theory}, {\bf 158}, 
  Mathematical surveys and monographs, American Mathematical Society, 2010.





\end{thebibliography}
\end{document}